\documentclass[conference]{IEEEtran}
\IEEEoverridecommandlockouts
\usepackage{amsmath,amssymb,amsfonts,mathtools, amsthm}
\usepackage{algorithmic}
\usepackage{graphicx}
\usepackage{enumitem} 
\usepackage{textcomp}
\usepackage{capt-of} % here remove if problem with figures

\usepackage{xcolor}
\usepackage{bm}                        
\usepackage{dsfont}  
\usepackage{color}
\usepackage{physics}

\usepackage[caption=false,font=footnotesize]{subfig}

\newtheorem{thm}{Theorem}
\newtheorem{theorem}{Theorem}
\newtheorem{lemma}[thm]{Lemma}
\newtheorem{remark}[thm]{Remark}

\newcommand{\panos}[1]{{\textcolor{cyan}{Panos: #1}}}

\def\BibTeX{{\rm B\kern-.05em{\sc i\kern-.025em b}\kern-.08em
    T\kern-.1667em\lower.7ex\hbox{E}\kern-.125emX}}
\begin{document}

% 1. Define your "if" switch
\newif\ifjournal
% 2. Set your switch (choose one)
%\journaltrue   % <-- Use this to SHOW the journal notes
\journalfalse  % <-- Use this to HIDE the journal notes

% here extension flag
\newif\ifextension
%\extensiontrue   
\extensionfalse

% \ifjournal
% \else
% \fi

\title{Rate-Fidelity Tradeoffs in All-Photonic and Memory-Equipped Quantum Switches
}

\author{Panagiotis~Promponas, %~\IEEEmembership{Student~Member,~IEEE,}
Leonardo~Bacciottini, Paul~Polakos, Gayane~Vardoyan, \\ Don~Towsley,
and~Leandros~Tassiulas%,~\IEEEmembership{Senior~Member,~IEEE}
\vspace{-\baselineskip}
\thanks{This work was supported by the Army Research Office MURI under the project number W911NF2110325, by the QuantumCT project, and by the National Science Foundation under project number CNS 2402862.}
\thanks{P. Promponas (panagiotis.promponas@yale.edu) and L. Tassiulas are with the Department of Electrical and Computer Engineering, Yale University, New Haven, CT, USA.}
\thanks{L. Bacciottini, G.
Vardoyan and D. Towsley are with the University of Massachusetts, Amherst, MA, USA.}
\thanks{P. Polakos is with Einblick LLC, Marlboro, NJ, USA.}

}% <-this % stops a space

\maketitle

\begin{abstract}
Quantum entanglement switches are a key building block for early quantum networks, and a central design question is whether near-term devices should use only optical components and photonic qubits or also incorporate quantum memories. We compare two architectures: an all-photonic entanglement generation switch (EGS) that repeatedly attempts Bell-state measurements (BSMs) without storing qubits, and a quantum memory-equipped switch that buffers entanglement and triggers measurements only when heralded connectivity is available (\emph{herald-then-swap} control). These two designs trade off simple, memoryless operation that avoids decoherence and memory-induced latency against heralding-based control that buffers entanglement to use BSMs more efficiently.
% The EGS operates continuously without quantum memories, avoiding decoherence and latency but wastes Bell-state measurements (BSM) attempts due to the lack of conditioning on heralded connectivity. The quantum memory-equipped switch buffers qubits and schedules BSMs based on heralded connectivity, improving resource utilization at the cost of latency and memory noise. 
% We formalize both models under a common hardware abstraction and characterize their achievable rate--fidelity regions under realistic constraints. We then introduce a benchmarking methodology that links hardware parameters to network-level performance.
We formalize both models under a common hardware abstraction and characterize their achievable rate–fidelity regions, yielding a benchmarking methodology that translates hardware and protocol parameters into network-level performance. Numerical evaluation quantifies the rate–fidelity tradeoffs of both models, identifies operating regions in which each architecture dominates, and shows how hardware and protocol knobs can be tuned to meet application-specific targets.
% Numerical results show that, in the regimes considered, quantum memory-equipped switches are preferable mainly in a well-defined high-fidelity band (e.g., \(0.75 \lesssim F \lesssim 0.9\)), where they achieve higher rates at fixed target fidelity,  while all-photonic architectures deliver higher utility outside this band, yielding clear, scenario-dependent design guidance.
\end{abstract}

%Rate–Fidelity Tradeoffs in All-Photonic vs quantum memory-equipped Quantum Entanglement Switches

%CAMERA-READY
% \begin{IEEEkeywords}
% Quantum Switch; Entanglement Generation Switch; Scheduling; Throughput Region
% \end{IEEEkeywords}

\section{Introduction}

Quantum networks that distribute high-fidelity entanglement
between distant users underpin applications in secure communication,
distributed sensing, and distributed quantum computing~\cite{van2014quantum,caleffi2024distributed}.
A central challenge is that in lossy optical channels, attenuation causes the success probability of each entanglement-generation attempt to decrease exponentially with distance. This limitation introduces the need for intermediary network nodes, such as quantum repeaters \cite{briegel1998quantum} and switches \cite{vardoyan2021stochastic}, that employ entanglement swapping to establish entanglement between non-adjacent clients.

% The evolution from linear repeater chains to multi-user quantum
% switches in star or mesh topologies marks a critical transition:
% the main challenge shifts from physical-layer error management
% on a single path to network-layer resource allocation, scheduling,
% and contention among multiple user pairs. 

The evolution from quantum repeaters to multi-user quantum switches introduces a new layer of network-level complexity: the switch must allocate shared resources and schedule entanglement swapping under contention among many user pairs, on top of the underlying physical-layer error processes. The quantum switch is a network node connected to three or more other nodes. It functions as a two-sided queueing system \cite{bhambay2025optimal}: on one side, requests for end-to-end entanglement arrive, while on the other, probabilistic link-level entanglement generation (LLEG) produces Bell pairs \cite{einstein1935can} that must be consumed through entanglement swaps to serve requests.

% There is a vast number of papers in the literature, both in the theoretical and experimental communities, that consider a multitude of different quantum switching models. 

A large body of work considers
a variety of quantum switching models \cite{vardoyan2021stochastic, gauthier2023architecture}. A useful taxonomy based on the principal architectural tradeoff between hardware simplicity and control intelligence is rooted in the use of quantum memories and the resulting operational logic. This categorization provides a framework for grasping the landscape of current research and motivates the selection of the models analyzed in this paper.

The first major class of switch architectures comprises all-photonic hardware \cite{azuma2015all} that performs entanglement swapping directly on arriving photons, without storing qubits in quantum memories at the switch. In these architectures, often termed entanglement generation switches (EGS) \cite{gauthier2023architecture, gauthier2024demand, yau2025service}, the switch, in the absence of quantum memories, cannot herald successful link-level entanglement (LLE) attempts prior to performing a Bell-state measurement (BSM). Moreover, BSMs are carried out by a set of Bell-state analyzers (BSAs), each of which can be assigned to any user pair. The lack of heralding forces the switch to trigger BSMs “blindly” on preassigned pairs. Hence, a BSA is often triggered when at most one of the two clients has a successful LLE (i.e., an entangled photon reaches the corresponding BSA), so no end-to-end entangled pair is produced. Recent work explores resource allocation schemes for EGSes, including on-demand blocking models that drop requests when resources are unavailable \cite{gauthier2024demand},
and load-balancing policies that poll nodes for requests \cite{yau2025service}.

% The first major class of switch architectures comprises all-photonic hardware \cite{azuma2015all} that performs entanglement swapping directly on arriving photons, without storing qubits in switch quantum memories. In these architectures, often termed entanglement generation switches (EGS) \cite{gauthier2023architecture, gauthier2024demand, yau2025service}, the switch, in the absence of quantum memories, cannot herald successful LLE attempts prior to performing a Bell-state measurement (BSM). In the EGS model, BSMs are performed from a number of Bell State Analyzers (BSA). Each BSA can be assigned to any pair of user nodes. The lack of heralding forces the switch to trigger BSMs blindly on preassigned user pairs. As a result, a BSA is often triggered in times when at most one of the two connected clients has a successful LLE (i.e., a photon successfully reaching the switch), so no end-to-end pair is produced. Recent work explores resource allocation schemes for EGSes, including on-demand blocking models that drop requests when resources are unavailable \cite{gauthier2024demand},
% and load-balancing policies that poll nodes for requests \cite{yau2025service}.

The second class of architectures incorporates quantum memories to enable more sophisticated, dynamic control logic. The operational paradigm of these switches is ``herald-then-swap". In this model, LLEG is first attempted between each client and the switch. The success of each attempt is signaled to the switch's controller via a ``herald" message. The switch makes a scheduling decision after this connectivity-acquisition phase. The controller, now aware of the set of available Bell pairs, can perform an optimal matching, pairing occupied memories to entanglement swapping operations. This approach enables more efficient use of the stored Bell pairs, since BSM attempts are applied only to heralded links. The memory-enabled switch has been the subject of extensive analysis characterizing its throughput region, developing scheduling policies \cite{vasantam2022throughput, tillman2024calculating, dai2021entanglement, promponas2024maximizing}, and analyzing the impact of memory decoherence \cite{valls2023capacity, panigrahy2023capacity, bhambay2025optimal}.

%zubeldia2025matching

% Choosing between these two switching paradigms reflects a fundamental tradeoff between information-driven efficiency and latency-induced decoherence. 
% As a result, neither model is strictly superior. This architectural tension motivates our comparative framework. To the best of our knowledge, this is the first work to directly quantify and compare the rate–fidelity tradeoffs of each model under realistic hardware assumptions.

These two architectures raise a concrete design question: for a given hardware stack and target application, should one deploy an all-photonic or a memory-equipped switch? Quantum memories can, in principle, boost end-to-end rates and fidelities, but if coherence times are short or heralding delays are large, a memory-based design can lose this advantage and even underperform compared to a simpler all-photonic switch. Thus, architectural superiority is regime-dependent rather than absolute. This paper makes that choice explicit by placing both models in a common abstraction and quantifying their achievable rate–fidelity tradeoffs under realistic hardware assumptions. Our key contributions are:

% \subsection{Contributions}

% This paper develops a rigorous comparative framework for near-term quantum switches by formalizing, analyzing, and benchmarking two distinct architectural abstractions: an all-photonic entanglement generation switch (EGS) and a quantum memory-equipped switch based on herald-then-swap control. 

\begin{itemize}[leftmargin=*]
\item We define mathematically precise models for both all-photonic and quantum memory-equipped switches, specifying their control strategies, physical resource assumptions, and operational logic. 

%     \item We derive closed-form characterizations of the achievable throughput rates and fidelity under each model, explicitly quantifying how hardware parameters impact end-to-end entanglement performance.

% \item We introduce a benchmarking methodology, that provides a unified basis for comparing these architecturally distinct models. This framework directly links low-level hardware parameters to network-level performance, enabling informed hardware/protocol co-design and identifying the specific operational regimes where one architecture is superior.
    
\item We derive closed-form characterizations of the achievable throughput and fidelity under each model, explicitly quantifying how hardware parameters impact end-to-end entanglement performance. Building on this, we develop a benchmarking methodology that provides a unified basis for comparing the two architectures, links low-level hardware parameters to network-level performance, and identifies operational regimes in which each architecture is superior.

\item Numerical evaluation quantifies the rate-fidelity trade-offs of both models, and shows how hardware and protocol knobs can be tuned to meet application-specific targets.
    % in the regimes considered, quantum memory-equipped switches are preferable mainly in a well-defined high-fidelity band (e.g., \(0.75 \lesssim F \lesssim 0.9\)), while all-photonic architectures deliver higher utility outside this band, providing scenario-dependent guidance.
    % % in the examined regimes, the quantum memory-equipped model is advantageous only for stricter fidelity requirements (e.g., $F \gtrsim 0.75$), whereas the EGS model is more favorable for lower fidelities.
\end{itemize}

\ifjournal
+ future work
Introduce some new models (in terms of end-to-end entanglement rate analysis) and their benefits/drawbacks (e.g., delay lines instead of memories, reallocatable memories, sources both in nodes and switches). Introduce the possibility of hybrid models. This contribution bullet would probably be on an extension and not in the conference deliverable.
Check also for future work \cite{azuma2015all} that uses a more convoluted protocol for EGS all photonics.
\else
\fi

\section{Preliminaries}
\label{sec:preliminaries}

\subsection{System Description}\label{subsec:system_description}

We consider a star-topology quantum network with a central switch node connected to $N$ client nodes. Let $\mathcal N$ be the node set and $\mathcal F\triangleq\{(i,j): i<j,\ i,j\in\mathcal N\}$. We will use $\mathcal F$ to avoid double-counting node pairs and refer to each $(i,j)\in\mathcal F$ as an end-to-end flow. The total number of flows is thus $F \triangleq \binom{N}{2} = |\mathcal F|$. The switch generates and distributes entangled qubit pairs (Bell pairs) between the clients. Each node is equipped with quantum memories to store qubits until end-to-end entanglement is established. The switch can perform at most $B$ BSMs in parallel to generate end-to-end Bell pairs due to hardware limitations (e.g., it has $B$ BSAs). 
We assume time-slotted operation. The detailed timing of LLEG attempts, BSMs, and heralding is model-dependent and is specified separately for the EGS and memory-equipped switch models.

\subsection{Fidelity and Quantum Noise Model}\label{subsec:werner_states}
% Entanglement fidelity is a measure of how close a generated entangled state is to the ideal target entangled state. 
% Diverse applications may impose in general different fidelity requirements.
% To model noisy entangled states, we use \emph{Werner states}, which are a specific type of mixed state that represent the effects of noise and imperfections using one parameter $w$. A Werner state for a pair of qubits can be expressed as:

Entanglement fidelity quantifies how close a generated state is
to a target Bell state, and different applications impose different
fidelity requirements. We model noisy entangled states as Werner
states, a one-parameter family of mixed states that captures
imperfections via the Werner parameter $w \in [0,1]$. A Werner state for a pair of qubits is expressed as:  \vspace{-5pt}
\begin{equation*}
\rho_w = w \ket{\Phi^+}\bra{\Phi^+} + (1-w) \frac{\mathbb{I}}{4},    \vspace{-5pt}
\end{equation*}
where $\ket{\Phi^+} = \frac{1}{\sqrt{2}}(\ket{00} + \ket{11})$ is one of the reference Bell states, $\mathbb{I}/4$ is the completely mixed state, and $w \in [0,1]$ is the Werner parameter. The fidelity of such a state is $F_w = \frac{3w+1}{4}$. When two Werner states with Werner parameters $w_1$ and $w_2$ are swapped with perfect operations, the resulting state's Werner parameter is simply the product $w_1 \cdot w_2$. Even if the actual entangled state is not a Werner state, such an approximation provides lower bounds for fidelity analysis under an appropriate set of assumptions \cite{vardoyan2023quantum}.

A depolarizing channel with parameter $q \in [0,1]$ acting on a Werner state simply replaces its Werner parameter $w$ with $w \cdot q$. We model the fidelity degradation of the swapped pair using a depolarizing channel with parameter $q_{\mathrm{BSM}}$. Independently, the probability that a BSM attempt succeeds is captured by $p_{\mathrm{BSA}}$ (for BSAs in the EGS model) or $p_{\mathrm{swap}}$ (for the memory-assisted model); these parameters affect the end-to-end generation rates rather than the output-state fidelity. We also model quantum decoherence in the memories as a depolarizing channel with parameter $e^{-\tau/T}$, where $T$ is the quantum memory coherence time (assumed equal for all memories), and $\tau$ is the time spent in memory. 

% Thus, consider a repeater-style step where two input Werner pairs with parameters $w_1$ and $w_2$ are created, with one qubit of each stored in memories at the end nodes for times $\tau_1$ and $\tau_2$ while their partners travel through fiber to a central BSA, then the Werner parameter of the resulting end-to-end Bell pair is
% \vspace{-5pt}
% \begin{equation} \vspace{-5pt}
% w_{\text{e2e}} = w_1 \cdot w_2 \cdot q_{\mathrm{BSM}} \cdot e^{-(\tau_1 + \tau_2)/T}.\label{eq:e2e_werner_parameter}
% \end{equation}
\subsection{Link-level Entanglement Generation Model}
\label{ssec:prelim_LLE_model}
%The physical system used to realize a qubit can, depending on its nature, act as a \emph{storage qubit} (i.e. a quantum memory), or as a \emph{flying qubit} (i.e. a traveling carrier). Specifically, photons are the natural choice for flying qubits, while matter-based systems (e.g., trapped ions, quantum dots, color centers, atomic ensembles, etc.) are typically used as storage qubits. Typically, a BSM operation between two flying qubits is performed using linear optics \cite{munro2015inside}, which is \emph{probabilistic in nature} with a success probability usually close to $0.5$. On the other hand, BSMs between storage qubits can be deterministic, but require the presence of quantum memories and control logic at the measuring node.

% We denote by \emph{LLE} the protocol that distributes and heralds Bell pairs between a client and the switch. Its characteristics depend on the underlying hardware platform; nonetheless, most implementations share a common structure. All the LLE protocols rely on sources (e.g., Symmetric Parametric Down-Conversion (SPDC) \cite{alshowkan2021reconfigurable}, or emitters \cite{barrett2005efficient} based on quantum dots, color centers, or trapped ions) that emit entangled photons traveling on the physical link between the client and the switch. LLE protocols are inherently probabilistic and require some form of heralding. These similarities enable a hardware-independent abstraction described by a common set of parameters. 

We denote by link-level entanglement generation (LLEG) the protocol that generates and distributes Bell pairs between a client and the switch; the resulting Bell pair is referred to as LLE. While details depend
on the hardware platform, most implementations share a common
structure: they use sources (e.g., Spontaneous Parametric Down-Conversion (SPDC)~\cite{alshowkan2021reconfigurable} or emitters~\cite{barrett2005efficient}
based on quantum dots, color centers, or trapped ions)
that send entangled photons along the client--switch link. LLEG
is inherently probabilistic and requires heralding. To differentiate between this heralding and the one needed to establish an end-to-end entanglement between distant nodes, we call this link-level heralding. A hardware-independent abstraction can be described by a common set of parameters. For LLEG between $i$ and the switch, define:

\textbf{Success probability $p_i$}: Probability that an LLEG attempt successfully produces a Bell pair between $i$ and the switch. 
% It depends on the  link length, detector efficiencies, etc.

\textbf{Heralding overhead $\tau_{\mathrm{hrld}}$}: The \emph{overhead} time needed \emph{after} a successful LLE has been established (e.g., one memory in a node and the switch share an entangled pair) to herald the link-level entanglement in the switch (so it can learn connectivity). Node-side confirmation for the LLE can occur in parallel and does not affect switch operation, since nodes ultimately learn success/failure via the end-to-end entanglement heralding.
% \textbf{Heralding overhead $\tau_{\mathrm{hrld}}$}: The time from when the first half of a newly created LLE is available at one end, until the LLE is fully established across the link and the switch has received the classical notification confirming this link-level entanglement. Node-side confirmation for the LLE can occur in parallel and does not affect switch operation, since nodes ultimately learn success/failure via the end-to-end entanglement heralding.

\textbf{Werner parameter $w_{i}$}: The Werner parameter associated with a successful LLE of node $i$. Hence, $w_{i}$ also accounts for decoherence on the qubits involved in this LLE while the switch awaits for the link-level heralding signals.
% The Werner parameter of a Bell pair generated between client $i$ and the switch after a successful LLE generation. 
% In a more specific model, this parameter could be a full density matrix for the resulting state.

% In particular, $w_{i}(\beta)$ absorbs all link-local noise and storage-induced decoherence, including any waiting time for classical heralding implied by the chosen LLE protocol and source placement.

\textbf{Tuning parameter $\beta$}: A controllable property of the source (e.g., pump power for SPDC or brightness for emitters \cite{vardoyan2023quantum}) that affects $p_i$ and $w_{i}$. Increasing $\beta$ raises $p_i$ but lowers $w_{i}$ due to multi-pair emissions. Thus, $p_i$ and $w_{i}$ are functions of $\beta$, denoted as $p_i(\beta)$ and $w_{i}(\beta)$.

% \textbf{Repetition time $\Delta t$}: Time between consecutive LLE attempts (e.g. pulse rate, $f_{\mathrm{pulse}} \equiv 1 / \Delta t$ for SPDC). 

\textbf{Attempt period $\Delta t$:}
Minimum time separation between the start of two consecutive LLEG attempts on a given link. For example, with an SPDC source that makes a single
pair-generation attempt per optical pulse, $\Delta t = 1/f_{\text{pulse}}$, where $f_{\text{pulse}}$ is the source pulse repetition rate (Hz).

\textbf{Multiplexing degree $S_i$}: Number of parallel LLEG attempts that can be performed between client $i$ and the switch. 
% The LLE can thus distribute at most $S_i$ Bell pairs per time slot. 
It captures spatial or frequency multiplexing, e.g., using several sources or a frequency-multiplexed SPDC source \cite{chen2023zero}. We refer to each such parallel attempt as a \emph{multiplexing frame}.

Well-known LLEG protocols that fit this abstraction\footnote{The discrete attempt abstraction is exact for pulsed sources \cite{kok2000postselected}. For continuous wavelength sources \cite{alshowkan2021reconfigurable}, where pair inter-generation times are approximately Poisson with rate $\lambda(\beta)$, $p_{i}(\beta)=1-e^{-\lambda(\beta)\Delta t}\approx \lambda(\beta)\Delta t$ for $\lambda(\beta)\Delta t\ll1$. We treat any success within a window as a Bernoulli trial registered at the slot boundary to fit in a discrete setting. } include, for example, the three schemes based on SPDC sources described in \cite{jones2016design}, the Barret-Kok protocol \cite{barrett2005efficient} using emitters, and the single click scheme as described in \cite{vardoyan2023quantum}. Some of these protocols are further discussed in later sections, as we can further specialize our switch models based on the specific LLEG protocol used. Finally, we note that not all these LLEG protocols are suitable for all switch models described in the following sections. For example, some protocols (e.g., the midpoint source protocol with block-based attempts from \cite{jones2016design}) may require memories at the switch, which is not compatible with the all-photonic switch model in Section \ref{sec:model_EGS}.

\section{Logical Model~1: All-Photonic Switch 
%(Port-Binding Control)
}
\label{sec:model_EGS}

%\subsection{Motivation and role}

This model of an \emph{entanglement generation switch (EGS)} captures an all-photonic switch with no quantum memories, $N$ client nodes, multiple entanglement sources and an in-switch pool of $B$ \emph{BSAs}, that can be \emph{re-bound} to port pairs. 
% only periodically at calibration epochs. 
The BSMs are probabilistic and their outcomes are heralded back to the endpoints, but the switch does not perform any link-level heralding before attempting the BSMs. The scheduler’s core decision is {\emph{which port pair each BSA listens to}} so that incoming photons are routed to the same BSA. Since client nodes may have multiple sources (thus, a multiplexing degree $S_i$), several BSAs can simultaneously serve the same clients. 
Similar switching models appear in \cite{gauthier2023architecture, gauthier2023control, gauthier2024demand, yau2025service}.

\subsection{System Formulation}
\label{ssec:EGS_system_formulation}
The LLEG protocol between the switch and client \(i\) relies on $S_i$ independent optical paths (e.g., fiber optic modes) into the switch’s ingress ports. Time is slotted with duration \(\Delta t\). Hardware calibration enforces a grouping of slots into epochs of \(T_{\mathrm{freeze}}\) slots each.
% (duration \(T_{\mathrm{freeze}}\Delta t\))
Each such interval is called a \emph{freeze epoch}, during which the control decision is fixed and updated only between epochs \cite{gauthier2024demand}. We assume a constant $T_{\mathrm{freeze}}$ and negligible calibration time at epoch boundaries; if this overhead is non-negligible, all per-second rates derived later should be scaled by the corresponding active-duty factor (i.e., by the fraction of time spent in freeze epochs).

% Time is slotted with slot duration \(\Delta t\); hardware calibration imposes \emph{freeze epochs} \don{What is a freeze window? Did you define it?}of \(T_{\mathrm{freeze}}\) slots (duration \(T_{\mathrm{freeze}}\Delta t\)). Control decisions are updated only between epochs and remain constant within an epoch.

We set the slot duration to the LLEG attempt period and
make two distinct modeling assumptions about heralding. First, the switch does not wait for \emph{link-level} heralding before attempting BSMs: BSAs are committed to node pairs within an epoch and attempt BSMs blindly on the incoming photons. Second, at the \emph{end-to-end} level, we assume that user nodes have sufficiently many local memories to buffer all
tentative Bell-pair halves across multiple slots until the classical
BSM outcomes are heralded, focusing only on the resource constraints of the switch. Under this  assumption, the start of the next slot is not delayed by end-to-end heralding, and the slot duration is not constrained by classical communication times.\footnote{If the BSA is slower than the source, or scarce node memories require waiting for end-to-end heralding before reuse, one can simply take the slot duration to be the dominant of these times. This only rescales the effective switching frequency and thus all per-second rates by a constant factor. Comparing the efficacy of the EGS architecture with and without the abundance-of-memories assumption for the nodes is left for future work.}

% We set the slot duration to the LLE repetition time and \egs{do not wait for link-level heralding before BSMs: since
% BSAs are committed to node pairs, they
% attempt BSMs blindly on the incoming photons without
% knowing which LLEs succeeded, and the resulting BSM
% outcomes are later heralded to the endpoints.} 
% We assume end nodes have sufficient memories to buffer their qubits until heralding occurs and focus on the constraints of the switch.
% \footnote{If the BSA is slower than the source, redefine the slot accordingly; the analysis is unchanged.}

\textbf{Control decisions:} We assume that the BSAs in the switch, the per-node sources, and the physical links are homogeneous. Therefore,
the scheduler’s decision within a freeze epoch $e$ is fully captured by the \emph{station-allocation} variables $x_{ij}\in\mathbb{Z}_{\ge 0}\quad (i,j) \in \mathcal F,$
denoting how many BSAs are reserved to serve client pair $(i,j) \in \mathcal F$ during epoch $e$. Feasibility is:
% \vspace{-4pt}
% \begin{equation}
% \sum_{i<j} x_{ij} \le B, 
% %\label{eq:station-budget} 
% \qquad \sum_{j\neq i} x_{ij} \le S_i \qquad (\forall i) %\label{eq:source-budget}
% \label{eq:X_set_definition_budget_source_constraints}
% \end{equation} 
\vspace{-4pt}
\begin{equation}
\sum_{(i,j)\in \mathcal{F}} x_{ij} \le B, 
%\label{eq:station-budget} 
\qquad \sum_{j<i} x_{ji} + \sum_{j>i} x_{ij} \le S_i \quad (\forall i). %\label{eq:source-budget}
\label{eq:X_set_definition_budget_source_constraints}
\end{equation} 
i.e., a global BSA budget and per-node source budgets.  Let us denote the convex set defined by inequalities \eqref{eq:X_set_definition_budget_source_constraints} as $\mathcal{X}$.

\textbf{Per-link LLEG success:} Per Sec.~\ref{ssec:prelim_LLE_model}, multiple LLEG protocols are possible. To illustrate the computation of $p_i(\beta)$, we consider a midpoint-source
architecture (between node and switch) with a single photon-pair attempt per slot.
% Hereafter, we assume midpoint sources (between node and switch). The framework remains agnostic: other source placements or protocols fit by re-deriving the parameters of Sec.~\ref{ssec:prelim_LLE_model}. 
Let $p_{\mathrm{pair}}(\beta)$ denote the probability that the source successfully produces two entangled photons. We expect that, $p_{\mathrm{pair}}(\cdot)$ is a strictly increasing function with respect to $\beta$ (see Sec.~\ref{ssec:prelim_LLE_model}).

An LLEG attempt succeeds iff the photon latches onto the node memory and the partner photon reaches the BSA $b$ within the BSM acceptance. The probability of a successful LLEG attempt is thus $p_i(\beta)
= p_{\mathrm{pair}}(\beta)\,\big(\eta_{\mathrm{m}} g_{\mathrm{m}}\big)\,\big(\eta_{\mathrm{sw}} g_{\mathrm{sw}}\big).$ Here $\eta_{\mathrm{m}}$ captures half-link and coupling losses to the node memory, $g_{\mathrm{m}}$ the memory write/gate efficiency; $\eta_{\mathrm{sw}}$ captures half-link and routing to any station $b$, $g_{\mathrm{sw}}$ the temporal/spectral/mode acceptance at the BSM input.

\textbf{End-to-end entanglement generation success:}
With exactly one LLE on each side of $b$, a coincidence attempt occurs in a slot iff \emph{both} sources succeed and the photons are captured. 
Thus, for a BSA $b$ serving $(i,j)$ in epoch $e$, \vspace{-5pt}
\begin{equation}  
p^{\mathrm{e2e}}_{ij}(\beta)\;=\;\bigl(p_i(\beta)\,p_j(\beta)\bigr)\;\xi^{2}\;p_{\mathrm{BSA}}.\label{eq:e2e_prob_egs} \vspace{-5pt}
\end{equation}
Here $\xi$ is the detector efficiency of every station $b$ and $p_{\mathrm{BSA}}$ is the success probability of the optical Bell state analyzers. Below we sum up the operation of this type of switch:

\textbf{Time-slot operation:} \emph{(i) Epoch allocation:} At the start of an epoch, choose $x_{ij}(e)$ satisfying inequalities \eqref{eq:X_set_definition_budget_source_constraints} and configure any $x_{ij}(e)$ stations to serve $(i,j)$ for the epoch.
\emph{(ii) Attempt:} Every node $i$ performs up to $S_i$ LLE attempts with identical success probability $p_i(\beta)$.
% \emph{(iii) Coincidence:} A coincidence occurs with probability $p_{ij}^{\mathrm{coin}}(\beta)
% = p_i(\beta)p_j(\beta)$ for each configured station serving $(i,j)$.
\emph{(iii) BSM:} 
BSMs are implemented with success probability (under homogeneity) $p^{\mathrm{e2e}}(\beta) \triangleq p^{\mathrm{e2e}}_{ij}(\beta)$ for every pair $(i,j)$.
\emph{(iv) Herald:} Upon success, the BSA sends the classical outcome to $i,j$. We assume that the next time slot begins immediately after the BSMs (step (iii)). That means that heralding occurs simultaneously and meanwhile each client node stores a Bell pair half in a dedicated quantum memory. For a graphical illustration of the EGS as well as its time-slot operation see Fig.~\ref{fig:main_illustration_switches}(a) and~\ref{fig:main_illustration_switches}(c) respectively.

\begin{figure*}[t]
  \centering
\includegraphics[width=.85\textwidth]{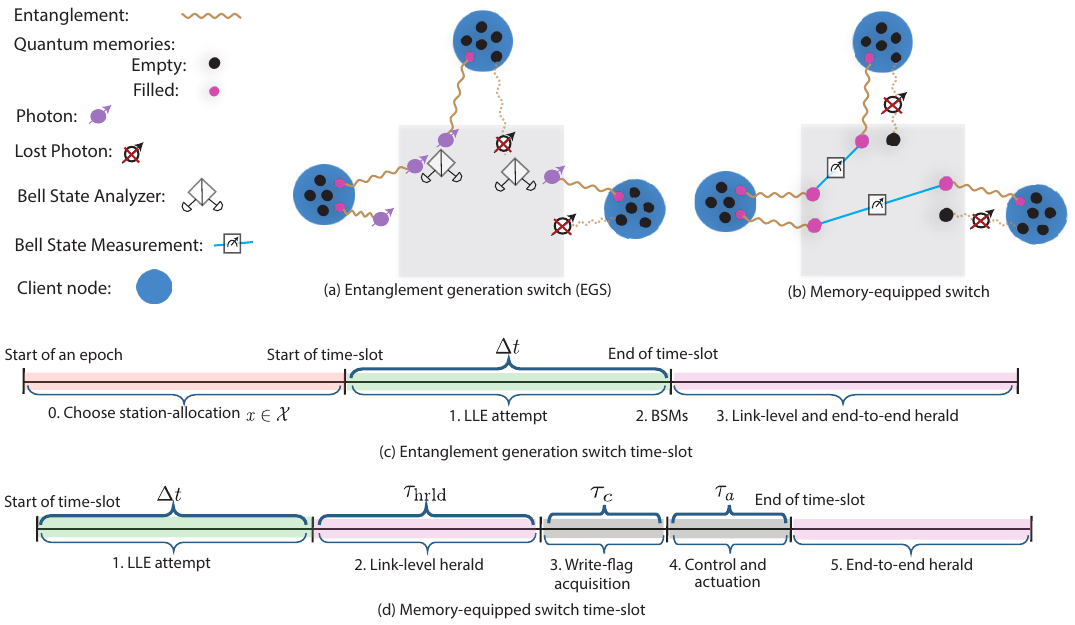}
  \caption{Graphical illustration of (a) an EGS, (b) a memory-equipped switch, (c) the time-slot operation of an EGS, and, (d) the time slot operation of a memory-equipped switch. Both of the switches in illustrations (a) and (b) use $S_i = 2$ for every $i$, and $B = 2$. For the memory-equipped model in (b), $M_i = 2$ for every node $i$.}
  \label{fig:main_illustration_switches}
\end{figure*}

% \leo{Let's go over these together in next meeting.} \panos{We can even delete it if we do not need it.}

% \textbf{Feasible physical architectures (examples that fit under this abstract switch model).}
% \begin{itemize}
% \item \emph{Fixed sources per node:} SPDC (PPLN/PPLT) or SFWM (SiN) arrays with identical pump clocks; QD emitters with electronic gating; spectral/temporal filtering for indistinguishability.
% \item \emph{Programmable ingress and delay:} low-loss MZI meshes (LNOI/SiN) for fanout/selection; EO delay lines and phase shifters to meet $|\Delta t|\le W$; AWG/microring WDM for mode-parallelism.
% \item \emph{BSAs (port-bindable):} integrated interferometers (path/polarization/time-bin), SNSPD arrays with FPGA/ASIC coincidence logic, epoch-based binding control.
% \item \emph{Calibration/freeze:} pilot tones and dithers for phase/delay calibration; bindings updated every \(T_{\mathrm{freeze}}\) slots; only fine-trim loops run intra-epoch.
% \end{itemize}

\subsection{Throughput Region}
\label{ssec:throughput_EGS}

The expected number of end-to-end entanglement pairs for every pair $i,j$ given $x_{ij}$ is  $p^{\mathrm{e2e}}(\beta)\,x_{ij}$. We are interested in the \emph{achievable} long term per slot service rates $\lambda_{ij}$ for every pair $i,j$ that can be achieved with these possible schedules. Collect these in a vector $\lambda = (\lambda_{ij})_{i<j}$ and define the \emph{throughput region} $\Lambda_{\mathrm{EGS}}$ as the set of all such achievable vectors.

Recall that the convex set defined by inequalities
\eqref{eq:X_set_definition_budget_source_constraints} is denoted as
$\mathcal{X}$. It is convenient to interpret $\mathcal{X}$ in
graph-theoretic terms. Consider the complete graph
$G = (\mathcal{N},\mathcal{F})$ with node set $\mathcal{N}$ and edge
set $\mathcal{F} = \{(i,j): i<j\}$. For any integer vector
$x = \{x_{ij}\}_{(i,j)\in\mathcal{F}} \in \mathbb{Z}_{\ge 0}^{F}$,
the entry $x_{ij}$ specifies how many units of capacity
(or, equivalently, how many BSAs) are allocated to edge (or, equivalently pair)
$(i,j)$. Constraints \eqref{eq:X_set_definition_budget_source_constraints}
then enforce that (i) the total number of such units does not exceed
the budget $B$, and (ii) for each node $i$, the total
number of units incident on $i$ does not exceed  $S_i$. In combinatorial-optimization terminology, any such $x \in \mathcal{X}$ is a capacitated $b$-matching
of the complete graph with vertex capacities $b_i = S_i$ and an
additional cardinality constraint of $B$ units. The convex hull
$\mathrm{co}(\mathcal{X})$ can thus be viewed as a capacitated
$b$-matching polytope augmented by this global budget
constraint (e.g., \cite{schrijver2003combinatorial}).

%-----------------------------------------

%Recall that the convex set defined by inequalities \eqref{eq:X_set_definition_budget_source_constraints} is denoted as $\mathcal{X}$. 

Let $co(\cdot)$ denote the convex hull of the set $\mathcal{X}$, then the set $\Lambda_{\mathrm{EGS}}$ is given by ${{\Lambda_{\mathrm{EGS}} \equiv co(\mathcal{X}) \cdot p^{\mathrm{e2e}}(\beta)/\Delta t}}$. Therefore, the throughput region is given by \vspace{-3pt}
\begin{align}
\label{eq:capacity-region}
\Lambda_{\mathrm{EGS}} =   
\bigg\{  
& \frac{p^{\mathrm{e2e}}(\beta)}{\Delta t} \cdot \lambda: \lambda =\sum_{x \in \mathcal{X}} \delta_{x} x, \notag \\ 
&  \sum_{x \in \mathcal{X}} \delta_{x} = 1,  \ \delta_{x} \ge 0,  \quad \textup{for all } x \in \mathcal{X} \bigg\}.  
\end{align}
The set $\Lambda_{\mathrm{EGS}}$ contains all possible end-to-end entanglement generation rate vectors that an EGS can achieve. We dropped the dependency with $\beta$ for ease of notation.

\textbf{Maximum admissible total throughput:}
$\Lambda_{\mathrm{EGS}}$ is a high-dimensional polytope in $\mathbb{R}^{F}$ induced by the feasible station allocations $\mathcal{X}$ (constraints~\eqref{eq:X_set_definition_budget_source_constraints}). To quantify the “size’’ of $\Lambda_{\mathrm{EGS}}$ in terms of total throughput, we consider the maximum aggregate end-to-end rate per second:
\[
R^{\mathrm{sec}}_{\mathrm{EGS}}
\;\triangleq\;
\max_{\lambda\in\Lambda_{\mathrm{EGS}}}\ \sum_{(i,j) \in \mathcal{F}}\lambda_{ij}
\;=\;
\max_{\lambda\,\in\,\mathrm{co}(\mathcal{X})}
\;\frac{p_{\mathrm{e2e}}(\beta)}{\Delta t}\sum_{(i,j) \in \mathcal{F}}\lambda_{ij}.
\]
In other words, $R^{\mathrm{sec}}_{\mathrm{EGS}}$ represents the maximum L1 norm vector in the  throughput region. Clearly, the limitation of this metric is that it does not give a holistic picture of the region. 
\ifextension
However, to further motivate the usage of $R^{\mathrm{sec}}_{\mathrm{EGS}}$ as a metric of the size of the throughput region, we introduce the following remark that connects it with the maximum achievable symmetric service rate (fair service rate allocation).

\begin{remark}
\label{remark:sum_throughtput_symmetric_load_connection}
Let $\kappa \in \mathbb{R}$ and $\mathbf{1} \in \mathbb{R}^{F}$ be the all-ones vector. Assume that the switch is homogeneous across node pairs. If $\kappa^\star \triangleq \max \left\{ \kappa : \kappa \mathbf{1}  \in \Lambda_{\mathrm{EGS}} \right\}.
$ Then $\kappa^\star = R^{\mathrm{sec}}_{\mathrm{EGS}} / F$.
\end{remark}

We prove Remark~\ref{remark:sum_throughtput_symmetric_load_connection} in the Appendix (Sec.~\ref{sec:appendix_remark_symmetric}). This remark shows that the \emph{maximum total throughput} and the \emph{maximum achievable symmetric rate} are equivalent in our framework (up to a normalization) and provide intuitive measure of the throughput region. 
\else
\fi
The main theorem of the section follows and provides a closed form formula for $R^{\mathrm{sec}}_{\mathrm{EGS}}$.

\begin{theorem}[Closed form for maximum total throughput]
\label{thm:egs-sum-throughput}
For the EGS model with $B$ BSAs and per-node multiplexing degrees $\{S_i\}$, the maximum expected aggregate end-to-end success rate is
\[
R^{\mathrm{sec}}_{\mathrm{EGS}} \;=\; \frac{p_{\mathrm{e2e}}(\beta)}{\Delta t}\,E^{\max}_{\mathrm{EGS}}, \quad \text{where}
\]
\begin{equation}
\label{eq:Emax}
E^{\max}_{\mathrm{EGS}}
\;=\;
\min\bigg\{
B,\;
\Big\lfloor \tfrac{1}{2}\sum_{i} S_i \Big\rfloor,\;
\sum_{i} S_i \;-\; S_{\max}
\bigg\},
\end{equation}
and $S_{\max}\triangleq \max_{i} S_i.$
\end{theorem}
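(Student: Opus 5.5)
Since the objective $\sum_{(i,j)}\lambda_{ij}$ is linear, its maximum over $\mathrm{co}(\mathcal{X})$ is attained at a point of $\mathcal{X}$. It therefore suffices to compute the largest total size $|x|\triangleq\sum_{(i,j)\in\mathcal F}x_{ij}$ of an integer capacitated $b$-matching of the complete graph on $\mathcal N$ with vertex capacities $S_i$, subject to the cap $|x|\le B$. The factor $p^{\mathrm{e2e}}(\beta)/\Delta t$ then multiplies through. I will show this maximum equals the right-hand side of \eqref{eq:Emax}, proving an upper bound and then a matching construction.

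\textbf{Upper bound.} Each of the three terms bounds $|x|$ for every feasible $x$.
First, $|x|\le B$ is the BSA budget in \eqref{eq:X_set_definition_budget_source_constraints}.
Second, summing the node constraints over all $i$ counts each edge twice, so $2|x|\le\sum_i S_i$. Integrality of $|x|$ then gives $|x|\le\lfloor\frac12\sum_iS_i\rfloor$.
Third, let $i^\star$ be a node with $S_{i^\star}=S_{\max}$. Since there are no self-loops, every unit on an edge contributes to the degree of at least one node other than $i^\star$. Hence $|x|\le\sum_{i\ne i^\star}S_i=\sum_iS_i-S_{\max}$.

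\textbf{Achievability.} Let $E^\star$ be the minimum in \eqref{eq:Emax}. Feasibility is downward closed: decreasing any $x_{ij}$ preserves all constraints. So it suffices to build a feasible $x$ with $|x|\ge \min\{\lfloor\frac12\sum S_i\rfloor,\sum S_i-S_{\max}\}$ ignoring $B$, and then remove units until $|x|=E^\star$. This is the classical maximum degree-constrained multigraph result: a multigraph without loops realizing degrees $\le S_i$ has at most $\min\{\lfloor\sum S_i/2\rfloor,\sum S_i-S_{\max}\}$ edges, and this bound is attained.

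I would prove attainment constructively by splitting into two cases.
\emph{Case 1: $S_{\max}\ge\sum_{i\ne i^\star}S_i$.} Connect $i^\star$ to every other node $i$ with $x_{i^\star i}=S_i$. This yields $\sum_iS_i-S_{\max}$ units and respects the capacity of $i^\star$.
\emph{Case 2: $S_{\max}<\sum_{i\ne i^\star}S_i$.} Here the binding term is $\lfloor\frac12\sum S_i\rfloor$. I would use a greedy rule: repeatedly add a unit between the two nodes with the largest remaining capacities. The invariant to check is that after each step the largest remaining capacity is still at most the sum of the others. Once this holds, the process stops only when the total remaining capacity is $0$ or $1$, which gives $\lfloor\frac12\sum S_i\rfloor$ units.

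An alternative for Case 2 is to list the nodes' capacity "slots" in sorted order, $S_{\max}$ copies of $i^\star$ first, and pair slot $k$ with slot $k+\lfloor\sum S_i/2\rfloor$. Because no node occupies more than half the slots, no pair is a self-loop. Either way, this case analysis, and specifically verifying the greedy invariant, is the main obstacle. I would verify it by noting that subtracting one from the top two capacities preserves the condition $\max\le\text{rest}$ up to a parity check.
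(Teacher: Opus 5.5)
Your proposal is correct and follows the paper's proof essentially step for step: reduction to an integral point of $\mathcal{X}$ by linearity, the same three upper bounds, the star construction when $2S_{\max}\ge\sum_i S_i$, greedy pairing of the two largest residual capacities otherwise, and truncation to enforce the budget $B$. When you write out the greedy invariant, the only delicate case is a three-way tie at the top. There the maximum residual does not decrease, and the paper's own claim that the imbalance $r_1-\sum_{i\ge 2}r_i$ stays constant glosses over this. Even so, ``largest residual at most the sum of the others'' persists unless the remaining total drops to $1$, which is exactly your parity caveat, and your slot-pairing alternative avoids the issue entirely.
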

We now present a lemma that gives intuition in $E^{\max}_{\mathrm{EGS}}$, and is useful for the proof of  Theorem~\ref{thm:egs-sum-throughput} as well as for the rest of the analysis.

%[Cardinality of an optimal capacitated b-matching]
\begin{lemma}
\label{lem:bmatching}
Let $\mathcal{X}$ be the integral polytope with ${x=\{x_{ij}\}_{(i,j)\in \mathcal{F}}}$ obeying
inequalities \eqref{eq:X_set_definition_budget_source_constraints}.
% $\sum_{i<j}x_{ij}\le B$ and $\sum_{j\neq i}x_{ij}\le S_i$ for all $i$.
Then the maximum cardinality of a feasible allocation is $E^{\max}_{\mathrm{EGS}}$ in~\eqref{eq:Emax}:  \vspace{-4pt}
\[
\max_{x\,\in\,\mathcal{X}}\;\sum_{(i,j)\in \mathcal{F}} x_{ij} \;=\; E^{\max}_{\mathrm{EGS}}.  \vspace{-5pt}
\]
\end{lemma}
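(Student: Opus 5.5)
The plan is to prove the two inequalities separately: first that no feasible allocation exceeds $E^{\max}_{\mathrm{EGS}}$, then that some integral allocation attains it. Throughout, write $\Sigma \triangleq \sum_i S_i$ and note that $x\in\mathcal{X}$ is a multigraph on the complete graph $G$ (parallel edges allowed, no loops) in which vertex $i$ has degree at most $S_i$ and there are at most $B$ edges.

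\emph{Upper bound.} Each of the three terms in \eqref{eq:Emax} comes from a simple counting argument. The bound $\sum x_{ij}\le B$ is the global BSA constraint. Summing the per-node constraints gives $2\sum x_{ij}=\sum_i \deg_x(i)\le \Sigma$, and integrality yields $\sum x_{ij}\le \lfloor \Sigma/2\rfloor$. For the third term, let $m$ be a node with $S_m=S_{\max}$. Every unit of $x$ is either incident to $m$ or not. Since each unit not incident to $m$ consumes two units of capacity among nodes $i\neq m$, we get
\[
\sum x_{ij}=\deg_x(m)+\#\{\text{units avoiding }m\}\le \sum_{i\neq m}\deg_x(i)\le \Sigma-S_{\max},
\]
where the first inequality uses that each unit has at least one endpoint in $\mathcal N\setminus\{m\}$. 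Hence $\sum x_{ij}\le E^{\max}_{\mathrm{EGS}}$.

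\emph{Achievability.} Let $K\triangleq\min\{\lfloor\Sigma/2\rfloor,\Sigma-S_{\max}\}$. It suffices to build a feasible multigraph with exactly $K$ edges, since deleting edges keeps it feasible and we can then truncate to $\min\{B,K\}$ edges. I would split into two cases.

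\emph{Case $S_{\max}\ge \Sigma-S_{\max}$.} Here $K=\Sigma-S_{\max}$. Connect every unit of capacity at every node $i\neq m$ to $m$, setting $x_{im}=S_i$. This uses $\Sigma-S_{\max}\le S_{\max}$ units at $m$, so it is feasible.

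\emph{Case $S_{\max}<\Sigma-S_{\max}$.} Here $K=\lfloor\Sigma/2\rfloor$, and we need a loopless multigraph whose degrees are at most $S_i$, with total degree $2\lfloor\Sigma/2\rfloor$. If $\Sigma$ is odd, first decrease some $S_i\ge1$ by one. Choose a node that is not the unique maximum, or, if the maximum is unique, note that $S_{\max}\le \Sigma-S_{\max}-1$ still holds after the reduction. The result is an even target sequence $d$ with $\max d\le \sum d/2$. The standard fact is that such a sequence is realizable by a loopless multigraph. I would prove it by induction on $\sum d$: pair the two currently largest entries with one edge, decrementing both. The new sequence still satisfies $\max\le$ half the sum. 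This holds because if the old maximum was strict, it drops together with the sum; if the sum condition was tight, the largest entry equals the sum of the others and decreasing it along with one other entry preserves tightness. Also, no entry becomes negative.

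The main obstacle is this realizability step, in particular checking the invariant $\max d\le\sum d/2$ under the greedy pairing when there are ties for the maximum, and handling the odd-parity reduction correctly. Everything else is direct counting.
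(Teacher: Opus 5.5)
Your proof is correct and follows the paper's route: the same three counting bounds (your incidence argument for $\Sigma-S_{\max}$ is a cleaner version of the paper's), a star at the largest node when $S_{\max}\ge \Sigma-S_{\max}$, greedy pairing of the two largest residuals otherwise, and truncation to $B$ edges. The tie case you flag closes immediately, because with an even sum a non-tight inequality means $\max d\le \sum d/2-1$, and the maximum never increases under a step, so the new maximum is at most the new half-sum; your parity reduction plus the invariant $\max d\le\sum d/2$ is in fact the right way to run the greedy, since the paper's claim that $r_1-\sum_{i\ge 2}r_i$ stays constant fails under ties (e.g.\ for $S=(2,2,2)$ it goes from $-2$ to $0$ after one step).
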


Both proofs of Lemma~\ref{lem:bmatching} and Theorem~\ref{thm:egs-sum-throughput} are in the Appendix (Sec.~\ref{sec:appendix_lemma_cardinality} and~\ref{sec:appendix_theorem_total_throughput}).

\ifjournal
\begin{remark}
    If we do not have the constraint of the sources (constraint \eqref{eq:source-budget}), and we are only limited by the number of BSAs, the throughput region becomes $\Lambda_{\mathrm{EGS}} = \{ p_{\mathrm{e2e}}(\beta) f_{\mathrm{EGS}} \lambda :  \sum_{i<j} \lambda_{ij} = B \}$. This can be proved since the convex hull of the new truncated $\mathcal{X}$ is the scaled simplex. Note that the total throughput in this case is constant (no matter the rate vector direction). \panos{Me, this remark should either be proved or moved to the journal version.}
\end{remark}
\else
\fi

\subsection{End-to-End Entanglement Fidelity}

In addition to the throughput analysis, we now derive a closed-form expression for the
fidelity of the end-to-end entangled state shared between any pair $i,j$ following Sec.~\ref{ssec:prelim_LLE_model}.

Recall that, given an LLEG protocol, $w_{i}(\beta)$
denotes the Werner parameter associated with a successful LLE of node $i$ (LLEG and link-level heralding in the switch related noise). Therefore, following Sec.~\ref{subsec:werner_states}, the Werner parameter of the end-to-end entanglement between nodes $i,j$ is
$$
w_{\text{EGS}}^{\text{e2e}}(\beta) \triangleq w_{i}(\beta)\,w_{j}(\beta)\,q_{\mathrm{BSM}} \ e^{-2 L/(v_f T)},
$$
where $L$ denotes the link length and $v_f$ the speed of light in the link. The last term, $e^{-2 L/(v_f T)}$, corresponds to the Werner parameter related to the memory decoherence of the two end nodes' entangled qubits from the time the switch learns about whether the LLEGs were successful until the end-to-end entanglement is heralded. This expression assumes that the end-to-end pair is used by the clients only after the end-to-end entanglement is heralded. Alternative models are possible (e.g., prepare-and-measure with postselection), in which case the relevant timing/decoherence window would be modified.

For demonstration, we further specialize this expression by assuming an LLEG protocol with midpoint sources on each link,
which generate pairs with initial Werner parameter $w_0(\beta)$ (a decreasing function of $\beta$). 
% Recall that $\tau_{\mathrm{hrld}}$ denotes the overhead time needed after the successful LLE has been established to herald it in the switch. Since we place the source in the middle of the link, after an established LLE we need to transmit a classical bit from the node to the switch for the link-level herald. 
Recall that $\tau_{\mathrm{hrld}}$ denotes the classical-notification latency associated with link-level heralding, i.e., the time it takes for the switch to learn whether the client--switch link attempt succeeded. In the EGS model the switch does not wait for this information before attempting a BSM; rather, BSMs are triggered blindly and the subsequently received link-level heralds are used to determine whether that BSM attempt was valid (both links succeeded) or wasted (at most one link succeeded). Since we place the source in the middle of the link, after a successful link attempt a classical bit is transmitted from the node to the switch to report the link-level herald. Hence, $\tau_{\mathrm{hrld}} = L/v_f$ and $w_{i}(\beta) = \,w_{j}(\beta) = w_0(\beta) e^{-\tau_{\text{hrld}}/T}$ for every pair $i,j$ since, in this architecture, only the state in the nodes is stored in a memory while we herald the LLE (see Sec.~\ref{ssec:EGS_system_formulation}).

The corresponding end-to-end fidelity of the swapped pair is $F_{\mathrm{EGS}}^{\mathrm{e2e}}(\beta)= \frac{1+3w_{\mathrm{EGS}}^{\mathrm{e2e}}(\beta)}{4}.$ Note that the end-to-end fidelity $F_{\mathrm{EGS}}^{\mathrm{e2e}}(\beta)$ scales with $\beta$ as the square of $w_{0}(\beta)$.

% 3. Use your "if" switch
\ifjournal
\subsection{Dynamics and Throughput-Optimal Policy}
\label{ssec:throughput-optimal}
\panos{This subsection will be relevant in the journal version. Probably will not use it in the conference. It serves on 1) proving that the throughput region is what we said earlier, and 2) introducing the max weight policy that will be helpful later to compare the whole regions/polytopes and not just the total maximum thoughput.}

We consider the slotted model of Section~\ref{sec:model_EGS} with epoch-based reconfiguration: control decisions (port bindings and station allocations) are chosen once every calibration epoch of $T_{\text{freeze}}$ slots and held fixed intra-epoch; see Secs.~\S2.2--\S2.3 for the feasible set $\mathcal{X}$ defined by \eqref{eq:X_set_definition_budget_source_constraints} and the throughput region $\Lambda_{\mathrm{EGS}} = f_{\mathrm{EGS}} \,p_{\text{e2e}}(\beta)\cdot \mathrm{co}(\mathcal{X})$ in \eqref{eq:capacity-region}. 
\paragraph{Queueing model and arrivals.}
For each unordered client pair $(i,j)$ we maintain a backlog (request queue) $Q_{ij}(t)$, updated at slot $t$ as
\[
Q_{ij}(t{+}1)=\bigl[\,Q_{ij}(t)-\mu_{ij}(t)\,\bigr]^+ + A_{ij}(t),
\]
where $A_{ij}(t)$ are exogenous requests with rates $\lambda_{ij}=\mathbb{E}[A_{ij}(t)]$ (finite second moments), and $\mu_{ij}(t)$ is the service offered in slot $t$. If the epoch-$e$ allocation is $x_{ij}(e)\in\mathcal{X}$ (held for $t\in\{eT_{\text{freeze}},\dots,(e{+}1)T_{\text{freeze}}-1\}$), then the \emph{expected} service per slot equals
\[
\mathbb{E}[\mu_{ij}(t)\mid x(e)] \;=\; \,p_{\text{e2e}}(\beta)\,x_{ij}(e),
\]
with $p_{\text{e2e}}(\beta)$ given in Sec.~\S2.3 and for $e = t \ mod \ T_{freeze}$. Hence, in this model, we have to optimize our decisions $x_{ij}(1), \dots$ for every pair of nodes $i, j$ to satisfy all the requests or maximize the throughput depending on what the goal of the switch is.
\paragraph{Max-Weight (MW) policy at epoch boundaries.}
At the start of epoch $e$, observe backlogs $Q_{ij}(eT_{\text{freeze}})$ and choose
\begin{equation}
\label{eq:MW}
x^{\star}(e)\in\arg\max_{x\in\mathcal{X}} \sum_{i<j} Q_{ij}(eT_{freeze})\,x_{ij}.
\end{equation}
Because $p_{\text{e2e}}(\beta)$ is a positive constant multiplier for all pairs in this model, maximizing the expected weighted service $\sum_{i<j}Q_{ij}(e)\,p_{\text{e2e}}(\beta)\,x_{ij}$ is equivalent to \eqref{eq:MW}. The selected $x^\star(e)$ is then held for all $T_{\text{freeze}}$ slots of the epoch (bindings are calibrated once per epoch). The theorem below is standard in the literature, we added the calibration periods restriction.
\begin{theorem}[Throughput optimality of epoch-based Max-Weight]
\label{thm:MW}
Suppose the arrival rate vector $\boldsymbol{\lambda}=(\lambda_{ij})$ lies in the interior of the throughput region, $\boldsymbol{\lambda}\in\mathrm{int}(\Lambda_{\mathrm{EGS}})$. Then the epoch-based Max-Weight policy \eqref{eq:MW} stabilizes the system (all queues are positive recurrent with finite time-average backlog), and hence achieves the full throughput region $\Lambda$.
\end{theorem}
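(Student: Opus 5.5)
The plan is the standard Lyapunov drift argument for Max-Weight, adapted to frames of length $T_{\mathrm{freeze}}$. The approach is to take the quadratic Lyapunov function $V(Q)=\tfrac12\sum_{(i,j)\in\mathcal F}Q_{ij}^2$ and analyze its $T_{\mathrm{freeze}}$-slot drift $\Delta_e \triangleq \mathbb{E}[V(Q((e{+}1)T_{\mathrm{freeze}}))-V(Q(eT_{\mathrm{freeze}}))\mid Q(eT_{\mathrm{freeze}})]$, sampled only at epoch boundaries. The embedded chain $\{Q(eT_{\mathrm{freeze}})\}_e$ is a Markov chain. We first show that it is positive recurrent with finite mean backlog via Foster's criterion. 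We then transfer this to the full slot process, using the fact that within an epoch queues change by at most $T_{\mathrm{freeze}}$ times bounded increments.

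First, we bound the one-slot drift. Using $([q-\mu]^++a)^2\le q^2+\mu^2+a^2-2q(\mu-a)$, summing over the $T_{\mathrm{freeze}}$ slots of the epoch, and writing each $Q_{ij}(t)$ inside the epoch as $Q_{ij}(eT_{\mathrm{freeze}})$ plus an error bounded by $\sum_{s}(A_{ij}(s)+\mu_{ij}(s))$, we aim for
\[
\Delta_e \le C - T_{\mathrm{freeze}}\sum_{(i,j)\in\mathcal F} Q_{ij}(eT_{\mathrm{freeze}})\bigl(p^{\mathrm{e2e}}(\beta)x^\star_{ij}(e)-\lambda_{ij}\bigr).
\]
The constant $C$ depends on $T_{\mathrm{freeze}}^2$, on the second moments of the arrivals, and on $B$, since $\mu_{ij}(t)\le x_{ij}\le B$. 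The cross terms from the intra-epoch error are $\mathcal O(T_{\mathrm{freeze}}^2)$ and independent of $Q$, because service is bounded and arrivals have finite second moments. Here we use that $x^\star(e)$ is held fixed and is determined by $Q(eT_{\mathrm{freeze}})$. Conditional on the boundary queues, the Bernoulli successes therefore have mean exactly $p^{\mathrm{e2e}}(\beta)x^\star_{ij}(e)$ in each slot.

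Second, we use interiority. Since $\lambda\in\mathrm{int}(\Lambda_{\mathrm{EGS}})$, there is $\epsilon>0$ with $\lambda+\epsilon\mathbf 1\in\Lambda_{\mathrm{EGS}}$. By \eqref{eq:capacity-region} we can then write $\lambda+\epsilon\mathbf 1=p^{\mathrm{e2e}}(\beta)\sum_x\delta_x x$, a convex combination over $\mathcal X$, which is finite. Because the Max-Weight objective is linear, the argmax \eqref{eq:MW} satisfies $\sum Q_{ij}x^\star_{ij}\ge\sum Q_{ij}\sum_x\delta_x x_{ij}$. Hence
\[
\sum_{(i,j)\in\mathcal F} Q_{ij}\bigl(p^{\mathrm{e2e}}x^\star_{ij}-\lambda_{ij}\bigr)\ge\epsilon\sum_{(i,j)\in\mathcal F} Q_{ij}.
\]
This gives $\Delta_e\le C-T_{\mathrm{freeze}}\epsilon\sum Q_{ij}(eT_{\mathrm{freeze}})$. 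Foster--Lyapunov then yields positive recurrence of the embedded chain, assuming irreducibility on the reachable set, e.g.\ via the usual assumption that the empty state is reachable. Telescoping the drift also gives $\limsup_{E}\frac1E\sum_{e<E}\mathbb E\sum Q_{ij}(eT_{\mathrm{freeze}})\le C/(T_{\mathrm{freeze}}\epsilon)$.

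The main obstacle is the intra-epoch step, for two reasons. First, the decision is stale for up to $T_{\mathrm{freeze}}-1$ slots. Second, the $[\cdot]^+$ truncation may waste service when a queue empties mid-epoch. I would handle both with the crude bound $|Q_{ij}(t)-Q_{ij}(eT_{\mathrm{freeze}})|\le\sum_{s=eT_{\mathrm{freeze}}}^{t}(A_{ij}(s)+B)$, which pushes all mismatch into the constant $C$. The same bound lifts the boundary-time backlog bound to all slots: $\mathbb E Q_{ij}(t)\le\mathbb E Q_{ij}(eT_{\mathrm{freeze}})+T_{\mathrm{freeze}}(\lambda_{ij}+B)$. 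This gives a finite time-average backlog. Rate stability, and hence achievability of every interior point of $\Lambda_{\mathrm{EGS}}$, then follows, and the full region is obtained by taking the closure.
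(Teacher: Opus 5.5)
Your proposal is correct and follows the same route as the paper: the quadratic Lyapunov function, the $T_{\mathrm{freeze}}$-slot drift sampled at epoch boundaries, the $\epsilon$-slack (Slater) point of $\mathrm{co}(\mathcal X)$ that the Max-Weight choice dominates, and Foster's criterion on the embedded chain. Your extra steps are the details the paper leaves to ``standard queueing algebra'': absorbing intra-epoch queue changes and the $[\cdot]^+$ truncation into the constant, the irreducibility caveat, and lifting the boundary-time backlog bound to all slots.
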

\begin{proof}    
Let $L(\mathbf{Q})=\tfrac12\sum_{i<j}Q_{ij}^2$ be a quadratic Lyapunov function. Consider the $T_{\text{freeze}}$-slot \emph{frame drift}
\[
\Delta_T \;\triangleq\; \mathbb{E}\!\left[L(\mathbf{Q}(t{+}T_{\text{freeze}}))-L(\mathbf{Q}(t)) \mid \mathbf{Q}(t)\right].
\]
Standard queueing algebra yields (e.g., \cite{neely2010stochastic}, Ch.~4) the bound
\[
\Delta_T \;\le\; C\,T_{\text{freeze}} \;+\; \sum_{i<j} Q_{ij}(t)\,\Big( \lambda_{ij}\,T_{\text{freeze}} \;-\; \mathbb{E}\!\big[\textstyle\sum_{\tau=0}^{T_{\text{freeze}}-1}\mu_{ij}(t{+}\tau)\mid \mathbf{Q}(t)\big] \Big),
\]
for some finite $C$ (from second-moment bounds). Because the epoch action is constant, $\mathbb{E}\big[\sum_{\tau=0}^{T_{\text{freeze}}-1}\mu_{ij}(t{+}\tau)\mid \mathbf{Q}(t)\big]=T_{\text{freeze}}\,f_{\mathrm{EGS}}\,p_{\text{e2e}}(\beta)\,x_{ij}$, so the drift upper bound becomes
\[
\Delta_T \;\le\; C\,T_{\text{freeze}} \;+\; T_{\text{freeze}}\,\sum_{i<j} Q_{ij}(t)\,\bigl(\lambda_{ij}-f_{\mathrm{EGS}}\,p_{\text{e2e}}(\beta)\,x_{ij}\bigr).
\]
Given $\boldsymbol{\lambda}\in\mathrm{int}(\Lambda_{\mathrm{EGS}})$, there exists $\epsilon>0$ and a (possibly randomized) stationary policy $\bar{x}\in\mathrm{co}(\mathcal{X})$ with $f_{\mathrm{EGS}}\,p_{\text{e2e}}(\beta)\,\bar{x}_{ij}\ge \lambda_{ij}+\epsilon$ for all $(i,j)$ (Slater condition). Choosing $x=x^\star(e)$ via \eqref{eq:MW} minimizes the right-hand side over $\mathcal{X}$ each epoch and thus yields
\[
\Delta_T \;\le\; C\,T_{\text{freeze}} \;-\; \epsilon\,T_{\text{freeze}}\,\sum_{i<j} Q_{ij}(t),
\]
which implies negative drift outside a bounded set and therefore mean-rate stability and positive recurrence (see \cite{tassiulas1990stability}). This is the classic Max-Weight throughput-optimality argument of Tassiulas and Ephremides \cite{tassiulas1990stability}, extended to \emph{frame-based} (epoch) decisions; the frame only scales constants and does not shrink the stability region as long as $T_{\text{freeze}}<\infty$ and channel statistics are stationary within each epoch.
In other words, we apply a frame-based Lyapunov argument on the embedded chain at epoch boundaries; establishing negative expected drift at those sampling times suffices for stability, so Max-Weight chosen once per epoch is throughput-optimal for any finite $T_{freeze}$.
\end{proof}
Although the paper is not for designing throughput optimal control policies, this max weight notion will appear later when we will compare the throughput regions.
\paragraph{Remarks.}
(i) The MW objective \eqref{eq:MW} is a capacitated maximum $b$-matching with weights $Q_{ij}(e)$ over $\mathcal{X}$ (constraints \eqref{eq:X_set_definition_budget_source_constraints}). Efficient solvers (e.g., network flow formulations) can compute $x^\star(e)$ exactly. 
(ii) The constant calibration/freeze overhead does \emph{not} affect throughput optimality: decisions are taken once per epoch (every “round” of time slots), and MW remains optimal for the region $\Lambda$ derived in \eqref{eq:capacity-region}. 
\vspace{0.25em}
\noindent\textbf{Practical note.} If desired, one may replace $Q_{ij}(t)$ by any nondecreasing function thereof (e.g., age- or deadline-weighted backlogs); the classical MW proof still applies provided the weights dominate queues in the large-$Q$ regime (see \cite{neely2010stochastic}, Ch.~6).
\else
\fi

\ifextension
\subsection{Finite Quantum Memories in the Nodes}
\label{ssec:EGS_finite_node_memories}

In the previous, we assumed that each node has
sufficiently many local memories to buffer all tentative qubit halves while the
BSM outcomes are classically heralded, so that the next slot can start
immediately after the in-switch BSMs. We now relax this assumption and
consider the \emph{minimal-memory} regime where each node $i$ has only
$S_i$ memories---just enough to latch and hold at most one qubit per
multiplexing frame in a slot. In this case, a node cannot reuse a memory for a
new LLEG attempt until it learns whether the previous qubit was consumed in a
successful end-to-end pair (or discarded), i.e., until the \emph{end-to-end}
herald arrives. Hence, this case can be seen as a \emph{lower bound} of the performance in terms of throughput while the infinite node-memory assumption acts as an \emph{upper bound}. Note, that although this assumption is not directly relevant to the switch's hardware it affects its minimum slot duration (hence its throughput) and thus we have to distinct between the cases.

Under this assumption, the time slot should encapsulate the full time duration from the LLEG attempt until the end-to-end herald. Let $\Delta t_{\mathrm{EGS-lb}}$ denote the time slot duration of the EGS architecture under this assumption, then\footnote{The term $\Delta t$ can be in reality shorter than a full LLEG attempt period. Indeed, because the sources are idle for part of each time slot, they can start the first emission before the slot starts.
% , so part of the LLE attempt period is effectively overlapped with the previous slot.
}
\[
\Delta t_{\mathrm{EGS-lb}} = \Delta_t + \tau_{\mathrm{hrld}} + L/c. 
\]

Operationally, steps (i)--(iii) in the EGS slot description remain unchanged,
but step (iv) can no longer overlap with the next slot.

All per-slot success probabilities and fidelity calculations remain the same. The only change is the conversion of the per-second switch throughput due to the longer cycle time. Hence, the throughput region becomes the same
polytope scaling as in \eqref{eq:capacity-region}, with $\Delta t$ replaced by
$\Delta t_{\mathrm{EGS-lb}}$:
\vspace{-4pt}
\begin{equation}
\Lambda_{\mathrm{EGS-lb}}
\;=\;
\frac{p^{\mathrm{e2e}}(\beta)}{\Delta t_{\mathrm{EGS-lb}}}\;
\mathrm{co}(\mathcal{X}).
\label{eq:capacity-region-finite}
\end{equation}
Moreover, Theorem~\ref{thm:egs-sum-throughput} carries over verbatim with
this replacement:
\[
R_{\mathrm{EGS-lb}}^{\mathrm{sec}}
\;=\;
\frac{p_{\mathrm{e2e}}(\beta)}{\Delta t_{\mathrm{EGS-lb}}}\,
E^{\max}_{\mathrm{EGS}},
\]
where $E^{\max}_{\mathrm{EGS}}$ is still given by \eqref{eq:Emax}. Thus, finite
node memories do not change the \emph{combinatorial} resource-allocation
structure of the EGS; they only reduce all per-second rates by the duty factor
$\Delta t / \Delta t_{\mathrm{EGS-lb}}$.

If nodes have more than $S_i$ memories (but not infinitely many), one can partially pipeline across
multiple slots while awaiting end-to-end heralds; this yields an intermediate
regime between Sec.~\ref{ssec:EGS_system_formulation} (effectively unbounded
buffering) and the fully synchronous model above. We focus on the minimal case with 
$S_i$ memories to keep the EGS comparison analytically transparent.
\else
\fi

\section{Logical Model 2: Quantum Switch with Internal Memories (Herald-Then-Swap Control)}
\label{sec:model2}

% \subsection{Motivation and Role}
% \label{sec:motiv-role}

This section models and analyzes a near-term, memory-equipped switch and quantifies the benefit of \emph{minimal} storage: storing photonic qubits in a memory only long enough for the switch to learn the slot’s connectivity and make informed swap decisions. Similar models can be found in \cite{vasantam2022throughput, tillman2024calculating, valls2023capacity, promponas2023full}. 
% This connectivity-aware control reduces wasted BSM attempts compared to the EGS, but it increases decoherence due to additional storage time and  lengthens the slot. 
This connectivity-aware control reduces wasted BSM attempts compared to the EGS and avoids the probabilistic optical-BSA bottleneck of EGS, but it increases decoherence due to additional storage time and lengthens the slot.

% We focus on this single-slot storage regime and do not consider longer-term buffering for distillation---its analysis is deferred to future work within the same framework.

% By conditioning on the connectivity, the switch avoids BSM attempts on empty quantum memories, possibly improving the end-to-end entanglement rate.
% On the other hand, it increases the time-slot duration and therefore (i) possibly reduces utilization, as each LLE protocol attempts once per slot, and (ii) inevitably increases decoherence due to longer qubit storage. This trade-off is highlighted in our analysis and is studied throughout the rest of the paper. We do \emph{not} consider longer storage for distillation---its analysis is deferred to future work within the same framework. 
\subsection{System Formulation}
\label{sec:sys-form}

The switch can perform \(B\) BSMs per slot and has \(M\ge2B\) quantum memories that store single qubit states for one slot. The switch dedicates \(M_i=S_i\) quantum memories to each client $i$ (so that \(\sum_i M_i = M\)): each multiplexing frame of the LLEG protocol has a dedicated switch memory on its ingress channel.\footnote{If $M_i\neq S_i$, only $\min\{M_i,S_i\}$ LLEs can be realized per slot; equivalently, in our analysis one can replace $M_i$ by $\min\{M_i,S_i\}$ (the bottleneck).} 
% \lleprelim{we can say that this is also in line with the emitter protocol so we also assume that for this reason}. 
This simplifies the exposition and makes this model a direct generalization of the EGS: each LLE can be \emph{briefly} buffered at the switch until connectivity is known. Similar to the EGS architecture model, we assume that user nodes have sufficiently many memories to buffer all tentative Bell-pair halves until the classical BSM outcomes are heralded (i.e., during the end-to-end entanglement herald). Under this assumption, the next slot can start immediately after the BSMs.

\textbf{Timing and slot-level operation.} Let $\Delta t_{\mathrm{mem}}$ denote the duration of a time slot. For demonstration of this duration, we assume that each multiplexing frame corresponds to a mid-link source that attempts to emit one photon pair per pulse and each node memory tries to latch it.  $\Delta t_{\mathrm{mem}}$ (and the associated qubit storage time) is determined by four serial latencies:  
(i) the duration of one LLEG attempt, $\Delta t$,\footnote{This $\Delta t$ can be  shorter than the duration of a full LLEG attempt because emissions can be \emph{pipelined} across slots. In particular, during the portion of a slot when the switch is busy with classical-processing/decision overhead, sources need not be idle: they may emit up to one one-way propagation time earlier so that the corresponding photons arrive at the start of the next slot.} (ii) the classical overhead link-level heralding time, $\tau_{\mathrm{hrld}}$,  (iii) write-flag+link-level heralds acquisition, $\tau_{c}$, and, iv) control \& actuation, $\tau_{a}$:
\vspace{-5pt}
\[
\Delta t_{\mathrm{mem}} \;=\; \Delta t + \tau_{\mathrm{hrld}} + \tau_{c} + \tau_{a}. \vspace{-5pt}
\]
In that case, $\tau_{\mathrm{hrld}} \;=\; \frac{L}{v_f}$, where \(L\) denotes the node–switch fiber length and \(v_f\) the speed of light through the physical link. Morever, $\Delta t = 1/f_{\mathrm{pulse}}$. An explicit extension to block-based attempt LLEG protocols is developed in Sec.~\ref{subsec:numerical_batched_attempts_find_optimal_K}.

$\tau_a > 0$ means that the switch needs time after it learns the connectivity to optimize the schedule. Hence, $\tau_a$ depends on the complexity of the scheduling algorithm.
In the EGS, we implicitly assumed that $\tau_a$ is negligible since we stick to a control decision for $T_{\mathrm{freeze}} \Delta t$ and do not need to change it in every time slot. While this is in principle possible in this model, it would require precomputing a distinct control action for every connectivity pattern, which is not scalable since the number of configurations grows exponentially in the number of clients. Operation within a slot \(t\) proceeds as follows

    \emph{(i) LLEG Attempt:} Every LLEG protocol carries out up to $S_i$ attempts %\lleprelim{This jargon only applies to midpoint source}
    \emph{(ii) Connectivity acquisition:} The switch receives heralding messages and stores the learned connectivity in an internal data structure.
    \emph{(iii) Matching:} The controller computes and implements a memory-to-BSM mapping.
    \emph{(iv) BSM:} Up to \(B\) selected memory pairs are measured out by the end of the slot. Recall that we assume infinitely many memories in the nodes so classical heralding of these BSM outcomes can take place afterward and can overlap with the next slot. If node memories are scarce, this delay of end-to-end entanglement herald can instead be absorbed into a larger effective slot duration. For a graphical illustration of the memory-equipped switch as well as its time-slot operation see Fig.~\ref{fig:main_illustration_switches}(b) and~\ref{fig:main_illustration_switches}(d) respectively.

% While steps (ii)–(iv) execute, memories are “busy” and LLE protocols are idle. This is a structural drawback of quantum memory-equipped operation that will reflect as a reduced utilization of the physical links. However, the end-to-end entanglement rate might increase due to the conditioning on connectivity.

% \panos{As all of the similar models I know of, they do not take calibration times to change the BSAs. Is that ok since we wait for $\Delta t_{\mathrm{mem}}$ before we use them in every time slot?}

\textbf{LLEG success:}
% Let $\eta_{\mathrm{m}} \in[0,1]$ be the transmissivities (including coupling) from source to client and to switch, and let $g_{\mathrm{m}} \in(0,1]$ be the additional efficiencies capturing, for example, frequency conversions and wavelength selection. Then 
The LLEG success probability under the midpoint link source, homogeneous memories assumption is 
$p_i(\beta) = p_{\mathrm{pair}}(\beta)\,\eta_{m}^{2}\,g_m^{2}.$
% \[
% p_i(\beta)
% \;=\;
% p_{\mathrm{pair}}(\beta)\,\eta_{m}^{2}\,g_m^{2}.
% \]
% The value of $p_i(\beta)$ is different in the batch attempt LLE \cite{jones2016design}, as we detail in Sec.~\ref{subsec:numerical_batched_attempts_find_optimal_K}. 
This derivation is for the mid-link,
single-attempt LLEG described above; the block-based attempt LLEG of~\cite{jones2016design}
leads to a different expression, derived in
Sec.~\ref{subsec:numerical_batched_attempts_find_optimal_K}. Given \(p_i(\beta)\), the connectivity $C_i(t)$ in slot $t$ satisfies: 
\begin{equation}
C_i(t) \sim \mathrm{Binomial}\!\left(M_i,\, p_i(\beta)\right).\label{eq:binomial_C}
\end{equation}
For each client \(i\), the sequence \(\{C_i(t)\}_t\) is i.i.d.

\textbf{Control variables:}
After connectivity is learned, the switch observes \(C(t)=\{C_i(t)\}_{i\in\mathcal N}\) and chooses a \emph{matching of occupied memories across client indices}, i.e., $y_{ij}(t)\in \mathbb{Z}_{\ge 0}, (i,j) \in \mathcal{F},$
% \[
% y_{ij}(t)\in \mathbb{Z}_{\ge 0},\quad i<j,
% \]
subject to the per-node supply and per-station budget, $\sum_{j < i} y_{ji}(t) +  \sum_{j > i} y_{ij}(t)  \;\le\; C_i(t)\quad (\forall i),\quad \text{and} \quad
\sum_{i<j} y_{ij}(t) \;\le\; B.$
% \[
% \sum_{j\neq i} y_{ij}(t) \;\le\; C_i(t)\quad (\forall i),\qquad
% \sum_{i<j} y_{ij}(t) \;\le\; B.
% \]
% Each unit of \(y_{ij}(t)\) routes two occupied memories (one tagged by \(i\), one by \(j\)) to a BSM.

\textbf{Swap success per attempt:}
Conditional on scheduling \(y_{ij}(t)\) swaps for pair \((i,j)\), each attempt succeeds with $p_{\mathrm{swap}}$. In the case of BSAs, $p_{\mathrm{swap}} \;=\; \xi^2\, p_{\mathrm{BSA}}$. However, since in this model BSMs are carried out between memory qubits and not necessarily through linear optics, swaps can be deterministic (i.e., $p_{\mathrm{swap}} = 1$) which is a benefit of memory-assisted switches.
% \[
% p_{\mathrm{swap}} \;=\; \xi^2\, p_{\mathrm{BSA}},
% \]
Hence, the number of  end-to-end Bell states produced for the client pair $(i,j)$ in slot \(t\) is $\mu_{ij}(t) \;\sim\; \mathrm{Binomial}\!\big(y_{ij}(t),\, p_{\mathrm{swap}}\big)$. 
% $(i,j) \in \cal{P}$.

\subsection{Throughput Region}
\label{sec:model2-throughput}
% \leo{After reading through most of this subsection, I can say that in general it could really use some additional explanations. It's very hard to follow.} \panos{better?}

Let $\mathcal N$ be the node set and $\mathcal F\triangleq\{(i,j): i<j,\ i,j\in\mathcal N\}$.
The \emph{connectivity state space} is $\mathcal{C} \triangleq \prod_{i\in\mathcal N}\{0,1,\ldots,M_i\}$. A vector $c\in\mathcal{C}$ is a connectivity state. The random variable $C = C(t) = \{C_i(t)\}_{i\in\mathcal N}$ samples a  connectivity state following \eqref{eq:binomial_C}. Since for every client $i$ the sequence $\{ C_i(t) \}_t $ is i.i.d., we can omit $t$ when the time slot is not relevant. The probability mass function factorizes as
\[
\pi(c)
\;\triangleq\;
\mathbb P\{C=c\}
\;=\;
\prod_{i\in\mathcal N}
\binom{M_i}{c_i}\,p_i(\beta)^{\,c_i}\big(1-p_i(\beta)\big)^{M_i-c_i}.
\]
Given a state $c$, the switch learns the available \emph{supply} $c_i$ of memories that have successfully established LLEs with node $i$ and may schedule up to $B$ BSMs. Let
{
\small
\[
\mathcal Y(c)
\triangleq
\Big\{\, y\in\mathbb Z_{\ge 0}^{F}\ :
\sum_{j < i} y_{ji} + \sum_{j > i} y_{ij} \le c_i\ , \forall i,
\sum_{(i,j)\in\mathcal F} y_{ij}\le B
\Big\}
\]
}
be the set of capacitated b-matchings  in state $c$, with a global budget constraint of $B$ units (see Sec.~\ref{ssec:throughput_EGS}). A vector $y \in \mathcal Y(c)$ is a valid schedule of BSMs \emph{given connectivity $c$}. Its convex hull, $co\,\mathcal Y(c)  \subset \ \mathbb R_{\ge 0}^{F},$ captures all randomized time-sharing schedules conditioned on $c$.

A (possibly randomized) stationary, causal scheduling rule is a map 
\(v:\mathcal C\to \mathsf{co}\,\mathcal Y(c)\) that, for each observed state \(c\), selects a convex combination of feasible matchings; equivalently, \(v(c)\) is the expected pairing vector in that slot under some randomization over \(\mathcal Y(c)\).

We are interested in the set of all achievable long-term end-to-end entanglement rate vectors $\lambda$, which we collect in the throughput region $\Lambda_{\mathrm{mem}}$ (in end-to-end pairs per second):
\begin{equation}
\begin{aligned}[t]
\Lambda_{\mathrm{mem}}
=\Big\{ &
\lambda\in\mathbb R_{\ge 0}^{F}:\; 
\exists\, v:\mathcal C\!\to\! \mathsf{co}\,\mathcal Y(\cdot)\ 
\\& \text{s.t.}\;
\lambda \;=\; \frac{p_{\mathrm{swap}}}{\Delta t_{\mathrm{mem}}}
   \sum_{c\in\mathcal C} \pi(c)\, v(c) \Big\}.
\end{aligned}
\label{eq:lambda-mem-mixture}
\end{equation}

\textbf{Maximum admissible total throughput:}
Let $E_{\mathrm{mem}}^{\max}(c)$ denote the maximum number of swaps in state $c$, i.e.,
\[
E^{\max}_{\mathrm{mem}}(c) \triangleq \max_{y\,\in\,\mathcal Y(c)}\;\sum_{(i,j)\in \mathcal{F}} y_{ij}.  
\]
To find a closed form for $E_{\mathrm{mem}}^{\max}(c)$, we use Lemma~\ref{lem:bmatching} by replacing $S_i$ with $c_i$ for every node $i$,
\[
\begin{aligned}
E_{\mathrm{mem}}^{\max}(c)
&= \min\!\bigg\{\, B,\ \Big\lfloor \tfrac12 \sum_i c_i \Big\rfloor,\ \sum_i c_i - \max_i c_i \bigg\}.
\end{aligned}
\]

The maximum \emph{expected} total end-to-end rate is then
\begin{align*}
   R_{\mathrm{mem}}^{\mathrm{sec}}
&\triangleq
   \max_{\lambda\in\Lambda_{\mathrm{mem}}}\ \sum_{(i,j) \in \mathcal{F}}\lambda_{ij} \\
&=
   \frac{p_{\mathrm{swap}}}{\Delta t_{\mathrm{mem}}}\;
   \max_{v(\cdot)}\;
   \sum_{c\in\mathcal C} \pi(c)\,
   \sum_{(i,j)\in\mathcal F} v_{ij}(c) \\
&\text{s.t.}\quad v(c)\in \mathrm{co}\,\mathcal Y(c)\quad\forall c\in\mathcal C.
\end{align*}
Because the maximization decouples over $c$, this is
\begin{align*}
   R_{\mathrm{mem}}^{\mathrm{sec}}
&=
   \frac{p_{\mathrm{swap}}}{\Delta t_{\mathrm{mem}}}\;
   \sum_{c\in\mathcal C} \pi(c)\,
   \max_{y\in \mathrm{co}\,\mathcal Y(c)}\;
   \sum_{(i,j)\in\mathcal F} y_{ij}.
\end{align*}
Since the objective is linear, the maximum over $\mathrm{co}\,\mathcal Y(c)$ is attained at an extreme point, so
\[
   \max_{y\in \mathrm{co}\,\mathcal Y(c)}\sum_{(i,j)\in\mathcal F} y_{ij}
   =
   \max_{y\in \mathcal Y(c)}\sum_{(i,j)\in\mathcal F} y_{ij}
   \;\triangleq\; E_{\mathrm{mem}}^{\max}(c),
\]
and therefore
\[
   R_{\mathrm{mem}}^{\mathrm{sec}}
   =
   \frac{p_{\mathrm{swap}}}{\Delta t_{\mathrm{mem}}}\;
   \sum_{c\in\mathcal C} \pi(c)\,E_{\mathrm{mem}}^{\max}(c).
\]

\ifextension
Following the same arguments used in Remark~\ref{remark:sum_throughtput_symmetric_load_connection}, we can prove that $\max \left\{ \kappa : \kappa \mathbf{1} \in \Lambda_{\mathrm{mem}} \right\} = R_{\mathrm{mem}}^{\mathrm{sec}} / F$. Hence, $R_{\mathrm{mem}}^{\mathrm{sec}}$ simultaneously captures (i) the maximum aggregate rate over all pairs and (ii)---up to normalization—the largest \emph{symmetric} service rate that the switch can offer to every user pair when traffic is homogeneous. Viewing the region through these two equivalent scalars is useful: the total rate summarizes the overall “size’’ of $\Lambda_{\mathrm{mem}}$, while the symmetric rate is more directly interpretable in fairness-oriented or uniformly loaded scenarios. However, obtaining
\else
Obtaining
\fi
a closed-form expression for the maximum expected total rate in this architecture is significantly harder than in EGS (cf. Theorem~\ref{thm:egs-sum-throughput}). Nevertheless, since $R_{\mathrm{mem}}^{\mathrm{sec}}$ is an expectation, we can estimate it through Monte Carlo by sampling connectivities $c$.

% The expectation in Eq.\,\eqref{eq:max_total_throughput_mem} can be estimated via . 

\ifjournal
or approximated analytically using the theorem below (assuming that we have sufficient BSAs for simplicity).
% Theorem statement (no proof)
\begin{theorem}
Let $M\triangleq\sum_{i} M_i$ denote the total number of memories and $p\triangleq p_i(\beta)$ the per-memory link success probability. Order the memory budgets as $M_{1}\ge M_{2}\ge\cdots$ and define the fractions $\varphi_k\triangleq M_{k}/M$. Then a piecewise approximation for the expected per-slot maximum number of successful pairings is
\[
\mathbb{E}\!\left[E^{\max}_{\mathrm{mem}}(C)\right]\approx
\begin{cases}
\displaystyle \frac{Mp}{2}-\frac{1}{2}\sqrt{\frac{Mp(1-p)}{2\pi}}, 
& \text{if a single client is dominant: }\varphi_1\approx \tfrac12,\\[8pt]
\displaystyle \frac{Mp}{2}-\frac{1}{2}\sqrt{\frac{2(\varphi_1+\varphi_2)\,Mp(1-p)}{\pi}}, 
& \text{if two clients are dominant: }\varphi_1\approx\varphi_2\approx \tfrac12,\\[8pt]
\displaystyle \frac{Mp}{2}, 
& \text{otherwise.}
\end{cases}
\]
\end{theorem}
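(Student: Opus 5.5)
The plan is to reduce everything to one expectation of a positive part and then evaluate it with a Gaussian (CLT) approximation in each regime. Assume $B$ is large enough that the budget term in Lemma~\ref{lem:bmatching} (with $S_i$ replaced by $C_i$) never binds. Write $S\triangleq\sum_i C_i$ and $C_{\max}\triangleq\max_i C_i$, and ignore the floor for now. The algebraic identity
\[
\min\Big\{\tfrac{S}{2},\,S-C_{\max}\Big\}=\tfrac{S}{2}-\tfrac12\big(C_{\max}-(S-C_{\max})\big)^+
\]
gives
\[
\mathbb{E}[E^{\max}_{\mathrm{mem}}(C)] \approx \frac{Mp}{2}-\frac12\,\mathbb{E}\big[D^+\big],\qquad D\triangleq C_{\max}-(S-C_{\max}).
\]
This uses $\mathbb{E}[S]=Mp$. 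The floor only contributes $\mathbb{E}[(S\bmod 2)/2]\approx 1/4=O(1)$, which is lower order than the $\sqrt{M}$ corrections, so I would drop it and say so explicitly. The whole task is then to estimate $\mathbb{E}[D^+]$, i.e., how much the largest supply exceeds all the others combined.

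\emph{Case ``otherwise'' (no dominant client).} If $\varphi_1<\tfrac12$ by a constant margin, then $\mathbb{E}[D]\le (2\varphi_1-1)Mp<0$ is linear in $M$ with a negative sign. The fluctuations of every $C_k$ and of $S$ are only $O(\sqrt{M})$, by Hoeffding/Chernoff bounds on sums of independent Bernoulli$(p)$ memory successes. Hence $\mathbb{P}(D>0)$ is exponentially small, and $\mathbb{E}[D^+]\le M\,\mathbb{P}(D>0)\to 0$. This yields $Mp/2$.

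\emph{Case one dominant client, $\varphi_1\approx\tfrac12$.} With overwhelming probability $C_{\max}=C_1$, so $D=C_1-\sum_{k\ge2}C_k$, a difference of independent binomials. It has mean $(2\varphi_1-1)Mp\approx 0$ and variance $\sum_k M_k p(1-p)=Mp(1-p)$. By the CLT, $D\approx\mathcal{N}(0,\sigma^2)$, and $\mathbb{E}[Z^+]=\sigma/\sqrt{2\pi}$ for $Z\sim\mathcal N(0,\sigma^2)$. This gives the correction $\tfrac12\sqrt{Mp(1-p)/(2\pi)}$.

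\emph{Case two dominant clients, $\varphi_1\approx\varphi_2\approx\tfrac12$.} Here $\sum_{k\ge3}C_k$ is negligible and $C_{\max}=\max(C_1,C_2)$, so $D\approx |C_1-C_2|$. Since $C_1-C_2$ is approximately normal with mean $\approx0$ and variance $(\varphi_1+\varphi_2)Mp(1-p)$, the half-normal mean $\mathbb{E}|Z|=\sigma\sqrt{2/\pi}$ gives $\tfrac12\sqrt{2(\varphi_1+\varphi_2)Mp(1-p)/\pi}$.

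I expect the main obstacle to be making ``$\approx$'' precise rather than the computations themselves. Concretely, I would justify replacing $\mathbb{E}[D^+]$ and $\mathbb{E}|C_1-C_2|$ by their Gaussian counterparts through a Berry--Esseen bound together with uniform integrability, which gives an $O(1)$ error against an $O(\sqrt M)$ main term. I would also quantify what ``$\varphi_1\approx\tfrac12$'' means, namely $|2\varphi_1-1|Mp=O(\sqrt{M})$, and outside that window fall back on the Chernoff argument of the first case. Finally, I would bound the event that the identity of the maximizer differs from the one assumed in each case (e.g., $C_2>C_1$ in the single-dominant case), again via Chernoff, so that it contributes only lower-order terms.
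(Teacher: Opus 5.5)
The paper gives no proof of this statement; it only says the proof is long and will appear in a technical report. There is therefore no route to compare yours against, and I judge it on its own. Your reduction $E^{\max}_{\mathrm{mem}}(C)=\tfrac{S}{2}-\tfrac12 D^+$ is correct (it holds up to a floor term bounded by $\tfrac12$). Your three Gaussian evaluations also reproduce the stated expressions exactly. Three points need repair before the rigorous version goes through.

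\textbf{The dominance window is mis-quantified.} You define $\varphi_1\approx\tfrac12$ by $|2\varphi_1-1|Mp=O(\sqrt M)$. The constant $1/\sqrt{2\pi}$, however, requires the mean $\mu\triangleq(2\varphi_1-1)Mp$ of $D$ to be $o(\sigma)$, where $\sigma\triangleq\sqrt{Mp(1-p)}$. If $\mu=\Theta(\sigma)$, the Gaussian comparison gives $\mathbb{E}[D^+]\approx\sigma\phi(\mu/\sigma)+\mu\Phi(\mu/\sigma)$, with $\phi,\Phi$ the standard normal density and distribution function. This differs from $\sigma/\sqrt{2\pi}$ by $\Theta(\sqrt M)$, the same order as the correction itself, so your claimed $O(1)$ error fails there: Berry--Esseen compares $D$ with $\mathcal N(\mu,\sigma^2)$, not with $\mathcal N(0,\sigma^2)$.

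You should require $\mu=o(\sigma)$. In the two-client case you likewise need $(\varphi_1-\varphi_2)Mp=o(\sigma)$ and $(1-\varphi_1-\varphi_2)Mp=o(\sigma)$. The latter forces $\varphi_1+\varphi_2\to 1$, so that factor in the statement is only cosmetic. Once the window is corrected, the band $|\mu|=\Theta(\sigma)$ is covered neither by the Gaussian formula nor by your Chernoff fallback, which needs $|\mu|/\sigma\to\infty$. The statement does not address that band, and you should say so rather than claim full coverage.

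\textbf{The ``otherwise'' branch needs an extra hypothesis.} You restrict to $\varphi_1<\tfrac12$ by a margin, and this restriction is necessary. If $\varphi_1>\tfrac12$ by a margin, then $D\approx(2\varphi_1-1)Mp>0$ with overwhelming probability, and $\mathbb{E}[E^{\max}_{\mathrm{mem}}(C)]\approx(1-\varphi_1)Mp<Mp/2$. The statement as written is therefore false in that regime, so make your amendment explicit.

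Also, $\mathbb{E}[D]\le(2\varphi_1-1)Mp$ has the wrong direction, since $\mathbb{E}[C_{\max}]\ge\max_k\mathbb{E}[C_k]$. What you actually need is a tail bound for each $2C_k-S$ followed by a union bound over $k$. Bernstein's inequality gives an exponent of order $(1-2\varphi_1)^2Mp$, whereas Hoeffding gives only order $(1-2\varphi_1)^2Mp^2$; the difference matters when $p$ is small.

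\textbf{A simplification.} No two clients can each hold strictly more than half of the supply, so $D^+=\sum_k\bigl(C_k-\sum_{j\neq k}C_j\bigr)^+$ holds exactly. Each summand is a sum of $M$ independent $\pm\,$Bernoulli$(p)$ variables with mean $(2\varphi_k-1)Mp$ and variance exactly $Mp(1-p)$. One Berry--Esseen estimate per summand then treats all regimes uniformly, and the separate step of controlling which client attains the maximum disappears. The single-client and two-client formulas become the cases where one, respectively two, summands have mean $o(\sigma)$ while all others have mean $\ll-\sigma$.
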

\begin{proof}
    The proof is very large - probably will be in a technical report.
\end{proof}
The theorem above assumes an abundance of Bell-state–measurement (BSM) stations, i.e., the rate is never station-limited. If at most $B$ BSMs can be performed per slot, impose the constraint by truncating the expectation appropriately.
An important consequence of the theorem is architectural: a repeater-like configuration in which two nodes hold most memories (or, in the switch model, two users dominate) uses the memory pool inefficiently (in terms of achieved entanglement per memory) and yields a reduced expected total throughput. To utilize memories effectively in a network, it is preferable to balance the memories across users so the switch pairs entanglements broadly; in this regime the attainable throughput approaches $\min\{B,\,Mp/2\}$ rather than suffering the dominance-induced penalties.
\else
\fi

% ------------

% From Eq.\eqref{eq:max_total_throughput_mem}, notice that a closed form for the total maximum expected rate is more challenging to get than in the corresponding formula of Model 1 (see Eq.\eqref{eq:max_total_throughput_egs}). Hence, to quantify it we could estimate this expectation using Monte Carlo sampling. However, we can also use the theorem below.

% \begin{theorem}
%     \panos{Here I have some bounds/approximations for the $R_{\mathrm{mem}}^{\mathrm{sec}}$. I will include at as a theorem. This will help compare the total throughput for the different models.}
% \end{theorem}

\subsection{End-to-End Entanglement Fidelity}
\label{sec:model2-fidelity}

Recall that, given an LLEG protocol, $w_{i}(\beta)$
denotes the Werner parameter associated with a successful LLEG of node $i$ (LLEG and link-level heralding in the switch related noise). Therefore, the Werner parameter of the end-to-end entanglement between nodes $i,j$ is
$$
w_{\text{mem}}^{\text{e2e}}(\beta) \triangleq w_{i}(\beta)\,w_{j}(\beta)\,q_{\mathrm{BSM}} \ e^{-4(\tau_c + \tau_a )/T} \ e^{-2L/(v_f T)}.
$$
The term $e^{-4(\tau_c + \tau_a )/T} $ corresponds to the Werner parameter related to the memory decoherence that all of the four qubits involved experience after the link-level herald but before the BSM, whereas $e^{-2 L/(v_f T)}$ corresponds to the decoherence of the two end node qubits from the BSM until end-to-end entanglement is heralded. Hence, $F_{\mathrm{e2e}}^{\mathrm{mem}}(\beta) \;=\; \frac{1+3\,w_{\mathrm{mem}}^{\mathrm{e2e}}(\beta)}{4}.$

For demonstration, we again assume a mid-link source that attempts one photon-pair per optical pulse, with each node memory trying to latch the first arriving photon. Recall that $\tau_{\mathrm{hrld}}$ denotes the overhead time needed after the successful LLEG has been established to herald it in the switch. Since we place the source in the middle of the link, after an established LLE we need to transmit a classical bit from the node to the switch for the link-level herald in the switch. Hence, $\tau_{\mathrm{hrld}} = L/ v_f$ and $w_{i}(\beta) = \,w_{j}(\beta) = w_0(\beta) e^{-2 \tau_{\text{hrld}}/T}$ for every pair $i,j$ since now we have memories in both the node and the switch.

\ifjournal
\subsection{Dynamics and Throughput-Optimal Policy}
\label{sec:model2-dynamics}
\panos{This subsection will be relevant in the journal version. Probably will not use it in the conference. It serves on 1) proving that the throughput region is what we said earlier, and 2) introducing the max weight policy that will be helpful later to compare the whole regions/polytopes and not just the total maximum thoughput.}
We consider stochastic demands for end-to-end pairs and design a policy that is throughput-optimal for the region in \S\ref{sec:model2-throughput}.
\paragraph{State, queues, and dynamics.}
Similarly to EGS, let $A_{ij}(t)$ be the exogenous pair requests for $(i,j)\in\mathcal F$ arriving in slot $t$ with $\mathbb E[A_{ij}(t)] = \lambda_{ij}$ and finite second moments similarly with the previous model. Let $Q_{ij}(t)$ be the backlog for pair $(i,j)$ at the beginning of slot $t$. Per \S\ref{sec:sys-form}, the switch observes the connectivity state $C(t)$, chooses a matching $y(t)\in\mathcal M(C(t))$, and attempts swaps with success $\mu_{ij}(t)\sim \mathrm{Binomial}(y_{ij}(t),p_{\mathrm{swap}})$ independently across scheduled edges. The queue recursion is the same as in the previous model:
\begin{equation}
Q_{ij}(t+1) \;=\; \big[\,Q_{ij}(t) - \mu_{ij}(t)\,\big]^+ \;+\; A_{ij}(t),
\qquad (i,j)\in\mathcal F,
\label{eq:model2-queue}
\end{equation}
with slot duration $\Delta t$ and service rate per slot equal to $\mu_{ij}(t)$.
\paragraph{Max-Weight over per-state matchings.}
Define the per-slot weight for edge $(i,j)$ as $w_{ij}(t)\triangleqQ_{ij}(t)$. Given $C(t)$, select the matching $y^\star(t)$ that solves the capacitated \emph{maximum-weight} problem
\begin{equation}
y^\star(t)\ \in\ \arg\max_{y\in\mathcal M(C(t))}\ \sum_{(i,j)\in\mathcal F} w_{ij}(t)\, y_{ij}.
\label{eq:model2-maxweight}
\end{equation}
(Equivalently, maximize $\sum w_{ij}\,\mathbb E[\mu_{ij}(t)\mid y]=p_{\mathrm{swap}}\sum w_{ij}y_{ij}$; the constant factor $p_{\mathrm{swap}}$ does not affect the optimizer.) The problem \eqref{eq:model2-maxweight} is a capacitated $b$-matching with node budgets $C_i(t)$ and a global station budget $B$; it admits a polynomial-time solution via reductions to min-cost flow.
\paragraph{Throughput optimality.}
Let $\Lambda_{\mathrm{mem}}$ be the throughput region in \eqref{eq:lambda-mem-mixture}. Then the policy \eqref{eq:model2-maxweight} stabilizes the queues for any arrival rate vector $\lambda$ strictly interior to $\Lambda_{\mathrm{mem}}$ (i.e., $\lambda\in \mathrm{int}(\Lambda_{\mathrm{mem}})$). In particular, there exists $\epsilon>0$ such that if $\lambda+\epsilon\mathbf 1\in \Lambda_{\mathrm{mem}}$, then the quadratic Lyapunov drift under \eqref{eq:model2-maxweight} is negative outside a compact set, implying strong stability:
\[
\limsup_{T\to\infty}\frac1T\sum_{t=0}^{T-1}\sum_{(i,j)}\mathbb E\big[Q_{ij}(t)\big]\ <\ \infty.
\]
Hence \eqref{eq:model2-maxweight} is \emph{throughput-optimal} for Model~2. The proof of this is omitted since almost the same tools were employed in the models in \cite{vasantam2022throughput, tillman2024calculating, valls2023capacity}.
Similarly to Model 1, although we do not focus on the stability of the logical switch models, we will use the throughput optimal policies later for the comparison of the throughput region.
\paragraph{Remarks.}
(i) If the objective is sum throughput rather than queue stability (no exogenous $A_{ij}$), set $w_{ij}(t)\equiv 1$ to recover a per-state cardinality maximization that attains $E_{\max}(C(t))$ and achieves the sum rate. (ii) Weighted fairness (e.g., proportional fairness) can be enforced by choosing $w_{ij}(t)$ as suitable functions of the empirical service or virtual queues; the stability proof follows the same drift arguments.
\else
\fi

\section{Numerical Results}
\label{sec:numerics}

% \panos{I plan to have i) memory based specific analysis using only maximum total throughput, ii) compare the two models using only maximum total throughput as metric, iii) compare the whole regions of the two models. Below you can see (i) that I have so far. }

We consider a default scenario reflecting near-term hardware parameters.
Whenever we deviate from this set of parameters, we will state the changed parameter explicitly.

\textbf{Baseline scenario:}
% We consider \(N=6\) client nodes with multiplexing degree \(S_i=3\) and \(B=8\) BSM stations (BSAs for EGS). 
We consider $N=6$ client nodes with multiplexing degree $S_i=3$ and allow up to $B=8$ concurrent BSM operations per slot. 
% (implemented by optical BSAs in the EGS, and by the switch's memory/gate resources in the memory-equipped model).
For BSAs in the EGS model, detectors have efficiency \(\xi=0.90\) and \(p_{\mathrm{BSA}}=0.50\). For the memory assisted model we take $p_{\mathrm{swap}} = 1$. The fiber attenuation is \(\alpha=0.2~\mathrm{dB/km}\) over a node–switch distance \(L_{\text{node}\to\text{sw}}=1~\mathrm{km}\), so each half-link (\(0.5~\mathrm{km}\)) has transmittance \(\eta_{\mathrm{fiber},\,1/2}=10^{-\alpha (0.5)/10}=0.98\) with \(g_{sw}= g_m = 0.85\). For the LLEG protocol we use mid-link SPDC sources: we use the SPDC example of \cite{dhara2022heralded}, where the probability of generating an \(n\)-pair coincidence is \(p(n)=(n+1)\frac{\beta^n}{(\beta+1)^{n+2}}\). Under this model, the probability that the source emits photons is \(\Pr(n\ge 1)=\frac{\beta(\beta+2)}{(\beta+1)^2}\) and the resulting initial fidelity is \(F_{0}(\beta)=\frac{3w_{0}(\beta) + 1}{4} = \frac{\Pr(n=1)}{\Pr(n\ge 1)}=\frac{2}{(\beta+1)(\beta+2)}\). We set \(\beta=0.03\) to get $F_0(\beta) = 0.96$. We assume \(v_f=2.0\times10^8~\mathrm{m/s}\), yielding \(\tau_{\mathrm{hrld}}=5~\mu\mathrm{s}\), with \(\tau_c=2~\mu\mathrm{s}\), \(\tau_a=3~\mu\mathrm{s}\), and source pulse rate \(f_{\mathrm{pulse}}=10~\mathrm{MHz}\).  Using these SPDC sources, we assume that the LLEG protocol for the EGS architecture is the single attempt protocol described in Sec.~\ref{sec:model_EGS}, and for the memory-equipped architecture we use the optimal block-based LLEG protocol described in detail in Sec~\ref{subsec:numerical_batched_attempts_find_optimal_K}. For fidelity-related knobs we use \(T=500~\mu\mathrm{s}\) and \(q_{\mathrm{BSM}}=0.97\). We approximate the stochastic variables using Monte Carlo with \(10^5\) samples.

\subsection{Block-based LLEG Protocol in Memory-Equipped Models}
\label{subsec:numerical_batched_attempts_find_optimal_K}

In this section, we study the trade-offs between end-to-end entanglement rate,
end-to-end fidelity, and users’ utility in the memory-based model. We use our framework to show how a hardware designer can optimize the block size \(K\), in a block-based attempt LLEG protocol.
% , using the maximum total throughput \(R_{\mathrm{mem}}^{\mathrm{sec}}\). 
% derived with our framework. The block-based attempt LLEG protocol reduces to the single attempt protocol used for demonstration in Sec.~\ref{sec:model2} when $K=1$.

\textbf{Block-based attempt LLEG protocol:} When the photon source employed by the LLEG protocol supports a fast pulse rate $f_{\mathrm{pulse}}\gg1/\Delta t_{\mathrm{mem}}$, it is convenient to include $K$ consecutive LLEG attempts at the beginning of a slot, and to load only the first successful photon in the switch's quantum memory \cite{jones2016design}. Which photon is loaded is reconciled at the end of the block through classical heralding, and has to match between the two sides to have a success. No structural change is needed to incorporate this to our model, only a (block-size–dependent) adjustment to the qubit storage times, and an increased probability $p_{i}(\beta)$ for successful LLEG attempts. 
% and a midpoint source LLE protocol.

Assuming amid-link source, the time-slot duration is \vspace{-3pt}
\begin{equation}
\Delta t_{\text{mem}} \;=\;  {(K-1)}/{f_{\mathrm{pulse}}}  \;+\; \tau_{\text{herald}} \;+\; \tau_c \;+\; \tau_a. \label{eq:slot-time} 
\end{equation}
Larger \(K\) increases latency but raises the LLEG success probability. Success occurs iff the first non-empty bin in the block
delivers both photons. Let the per-bin outcomes be
$q_2 = p_{\text{pair}}(\eta g_m)^2, \ 
q_1 = p_{\text{pair}}\eta g_m (1-\eta g_m), \
q_0 = 1 - p_{\text{pair}}(2\eta g_m-(\eta g_m)^2)$ (i.e., both, exactly one and none of the photons succeeding respectively).
With \(K\) independent bins, the probability that the first non-empty outcome delivers both photons (i.e., the probability of a successful LLEG) is \vspace{-4pt}
\begin{equation*}
\begin{aligned}
p(\beta, K)  &= \sum_{t=1}^{K} q_0^{\,t-1}\,q_2 = q_2\,\frac{1-q_0^{K}}{1-q_0} \\
&= \frac{\eta g_m}{2-\eta g_m}\,\Bigl[1-\bigl(1-p_{\text{pair}}(\beta)  \cdot (2\eta g_m-(\eta g_m)^2)\bigr)^K\Bigr].
\end{aligned}
\label{eq:psucc}
\end{equation*}
%\begin{equation*}
%\begin{aligned}
%p(\beta, K) &=  \frac{\eta g}{2-\eta g}\,\Bigl[1-\bigl(1-p_{\text{pair}}(\beta)  \cdot (2\eta g-(\eta g)^2)\bigr)^K\Bigr].
%\end{aligned}
%\label{eq:psucc}
%\end{equation*}
The end-to-end entanglement fidelity also changes (from Sec.~\ref{sec:model2-fidelity}).  Specifically, what changes with $K$ is the time that the \emph{successful} qubits spend in memories before the link-level herald in the switch and, consequently, the amount of decoherence. Let $T_{\mathrm{succ}} \in \{1,\dots,K\}$ denote the index of the first bin in the block for which the outcome is ``non-empty and good'' (i.e., it yields the two photons that eventually define the link-level Bell pair). As in the success probability calculation above, the event
that the first non-empty bin is $t$ and delivers both photons has probability $\mathbb{P}\{T_{\mathrm{succ}} = t,\ \text{success}\} = q_0^{\,t-1}\,q_2,
\quad t=1,\dots,K,$ and the overall success probability is $p(\beta,K) = \sum_{t=1}^{K} q_0^{\,t-1} q_2.$ Hence, \vspace{-8pt}
\[
\mathbb{P}\{T_{\mathrm{succ}} = t \mid \text{success}\}
= \frac{q_0^{\,t-1} q_2}{p(\beta,K)},
\qquad t=1,\dots,K.
\]
% Therefore, the additional pre link-level herald storage time (beyond the single-attempt case) is approximately
% \begin{equation*}
% \begin{aligned}
% \delta\tau_{\mathrm{block}}(\beta,K)
% &\;\triangleq\;
% \mathbb{E}\!\left[(K-T_{\mathrm{succ}}) \Delta t
% \;\middle|\; \text{success}\right] \\
% &=
% \frac{\Delta t}{\,p(\beta,K)}
% \sum_{t=1}^{K} (K-t)\,q_0^{\,t-1} q_2.
% \end{aligned}
% \label{eq:delta-tau-batch}
% \end{equation*}
% For $K=1$ we have $\delta\tau_{\mathrm{block}}(\beta,1)=0$, so we recover the single-attempt model.

{\normalsize
Under the midpoint-source assumption, a successful LLEG on link $i$ corresponds to a mixture over possible success bins $t$, each inducing a different pre–link-level storage time. The Werner parameter associated with a successful LLEG on link $i$ in the block-based protocol is therefore}
{
\small
\[
w_i(\beta,K)
=
\frac{w_0(\beta)}{p(\beta,K)}
\sum_{t=1}^{K}
q_0^{\,t-1} q_2\,
\exp\!\left(
-\frac{2}{T}\Bigl(\tau_{\mathrm{hrld}} + (K-t)\Delta t\Bigr)
\right).
\]
}
Substituting this into the end-to-end expression (see Sec.~\ref{sec:model2-fidelity}), the Werner parameter of the shared pair between nodes $i,j$ under the block-based LLEG protocol becomes $w_{\mathrm{mem,block}}^{\mathrm{e2e}}(\beta,K)
\;\triangleq\;
w_i(\beta,K)\,w_j(\beta,K)\,
q_{\mathrm{BSM}}\,
e^{-4(\tau_c+\tau_a)/T}\,
e^{-2L/(v_f T)}.$
%\[
%w_{\mathrm{mem,bat}}^{\mathrm{e2e}}(\beta,K)
%\;\triangleq\;
%w_i(\beta,K)\,w_j(\beta,K)\,
%q_{\mathrm{BSM}}\,
%e^{-4(\tau_c+\tau_a)/T}\,
%e^{-2L/(cT)}.
%\]

% The corresponding end-to-end fidelity is
% $
% F_{\mathrm{mem,block}}^{\mathrm{e2e}}(\beta,K)
% \;=\;
% \frac{1 + 3\,w_{\mathrm{mem,block}}^{\mathrm{e2e}}(\beta,K)}{4}.
% $
% $F_{\mathrm{mem,block}}^{\mathrm{e2e}}(\beta,1)$ reduces to the non-blocked expression in Sec.~\ref{sec:model2-fidelity}.

% We now examine how the memory model behaves as we vary key parameters, using it to guide hardware/model choices.

\begin{figure*}[t]
  \centering
  \begin{minipage}{0.325\textwidth}
\centering\includegraphics[width=0.8\linewidth,height=0.15\textheight]{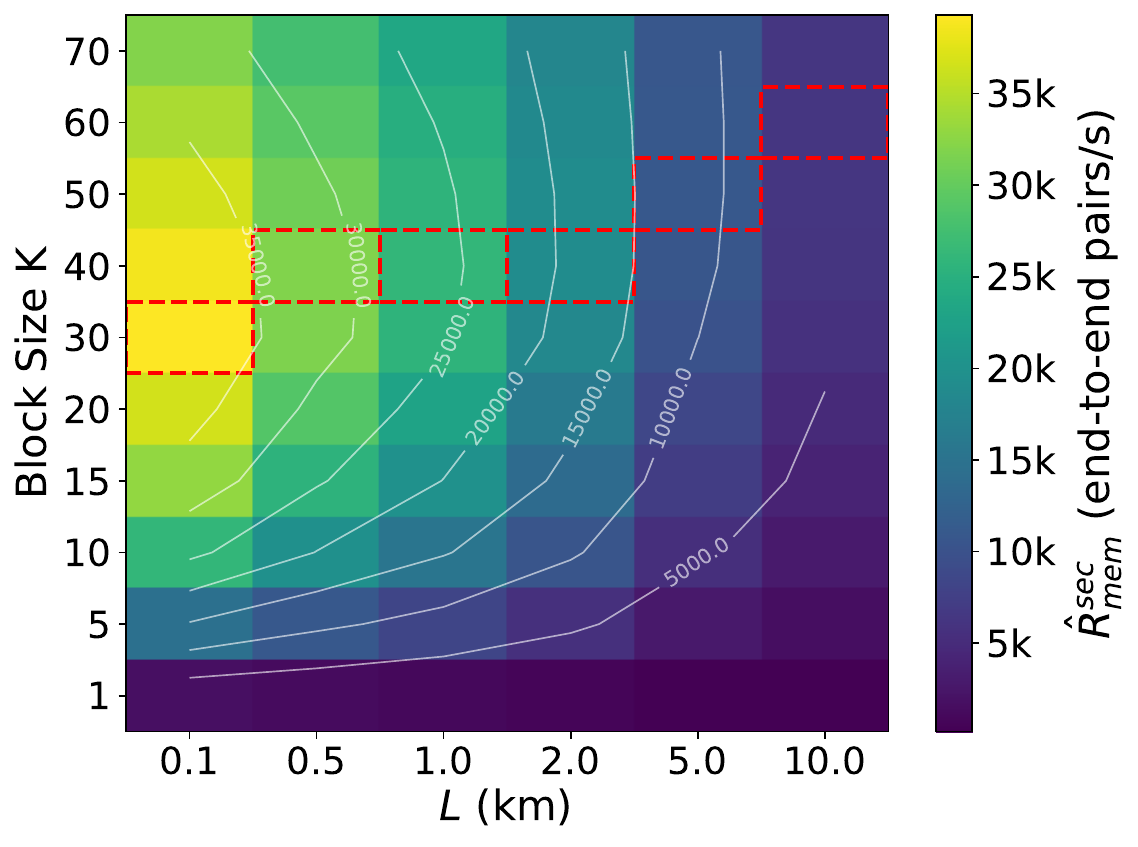}
    \vspace{-1.6ex}
\small (a)~$\beta = 0.03$ and $f_{\mathrm{pulse}}=10~\mathrm{MHz}$.
  \end{minipage}\hfill
  \begin{minipage}{0.325\textwidth}
    \centering
    \includegraphics[width=0.8\linewidth,height=0.15\textheight]{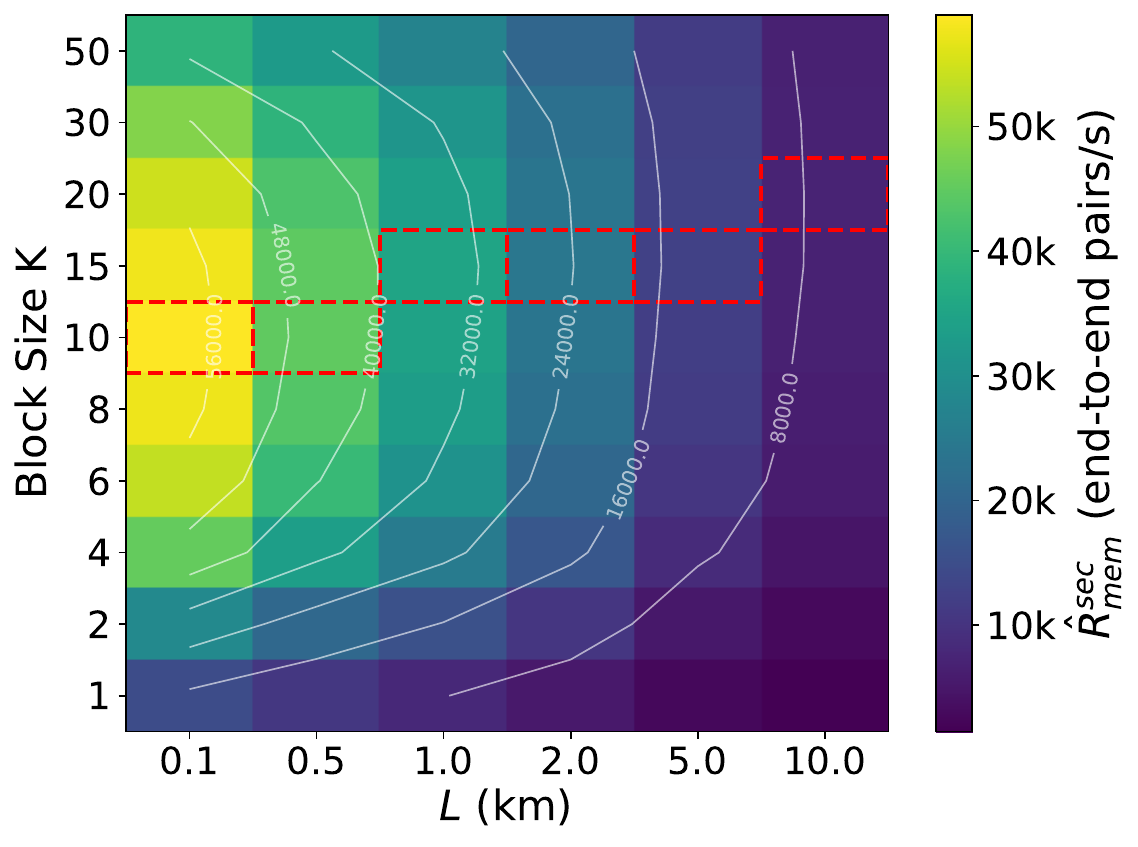} 
    \vspace{-1.6ex}
    \small (b)~\(\beta=0.15\) and $f_{\mathrm{pulse}}=10~\mathrm{MHz}$.
  \end{minipage}\hfill
  \begin{minipage}{0.325\textwidth}
    \centering
    \includegraphics[width=0.8\linewidth,height=0.15\textheight]{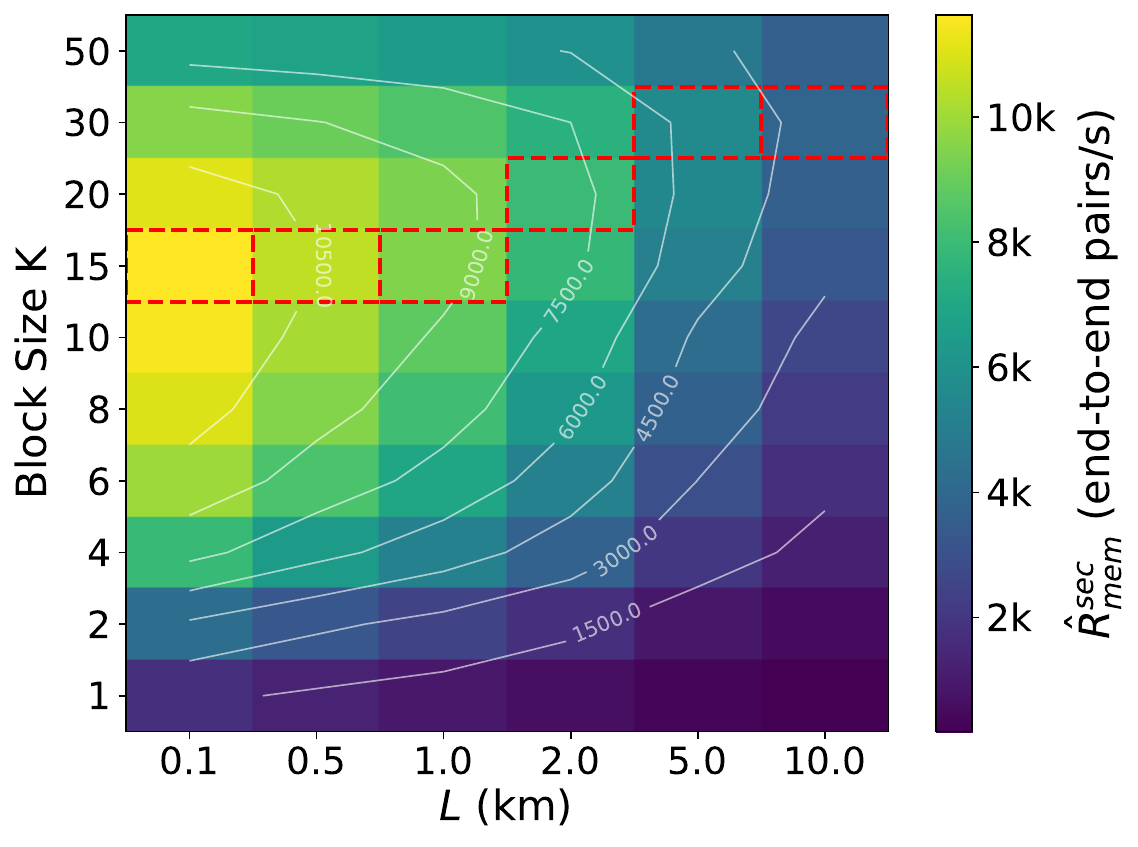} 
    \vspace{-1.6ex}
    \small (c)~$\beta = 0.03$ and \(f_{\mathrm{pulse}}=1~\mathrm{MHz}\).
  \end{minipage}
  \caption{Maximum total throughput \(\hat{R}_{\mathrm{mem}}^{\mathrm{sec}}\) as a function of block size \(K\) and link length \(L\); red dashed rectangles mark the  optimal \(K\) for each \(L\).}
  \label{fig:KL_heatmaps}
\end{figure*}
%here_check change R notation

\textbf{Maximum aggregated throughput versus \(K\) and \(L\):}
Fig.~\ref{fig:KL_heatmaps} contours how \(\hat{R}_{\mathrm{mem}}^{\mathrm{sec}} \triangleq R_{\mathrm{mem}}^{\mathrm{sec}} / F \) varies with \(K\) and distance \(L\). 
% We normalize the total rate with the number of node pairs to capture the service of end-to-end entanglements per flow. 
% \beta = 0.03, f_{\mathrm{pulse} = 10 MHz}
% \beta = 1, f_{\mathrm{pulse} = 10 MHz
% \beta = 0.03, f_{\mathrm{pulse} = 1 MHz}
In panel (a), under the default values $\beta = 0.03$ and $f_{\mathrm{pulse}}=10~\mathrm{MHz}$, a small \(L\) favors an optimal number of attempts of \(K\!=\!30\) (red dashed rectangles). As \(L\) increases, the link success probability drops, so a larger \(K\) is needed to maintain a high end-to-end rate; however, increasing \(K\) further becomes sub-optimal for the end-to-end rate because the slot duration grows per \eqref{eq:slot-time}, reducing the attempt rate. As \(L\) continues to grow, the rate declines and changing \(K\) yields diminishing returns. In panel (b), increasing \(\beta\) to~0.15 brightens the map and shifts the optimum to lower \(K\): higher single photon pair success lets us reduce \(K\), which shortens the slot and improves rate. In panel (c), slowing the sources (\(f_{\mathrm{pulse}}=1~\mathrm{MHz}\)) also lowers the optimal \(K\), since each added attempt increases the slot duration even more (see \eqref{eq:slot-time}).

\textbf{Fidelity heatmap versus \(K\) and source rate:}
We examine the end-to-end fidelity as a function of \(K\) and  \(f_{\mathrm{pulse}}\). The heatmap in Fig.~\ref{fig:fidelity_heatmap} shows that for fast sources, fidelity is insensitive to \(K\); so \(K\) can be chosen to optimize solely the rate. In contrast, for slower sources, increasing \(K\) lowers fidelity. The reason is that each additional attempt lengthens the slot (cf. Eq.~\eqref{eq:slot-time}), increasing memory storage time and thus decoherence. In general, raising \(K\) does not improve fidelity so the design problem is a rate–fidelity trade-off: increase \(K\) only if the rate gain justifies the fidelity loss.

%herefigure1
% \begin{figure}[t]
%   \centering
%   \includegraphics[width=0.54\linewidth]{Figures/mem_model_fidelity_contour/fidelity_heatmap_beta_0.2.pdf}
%   \vspace{-0.3cm}
%   \caption{End-to-end fidelity contours versus batch size \(K\) and source clock \(f_{\mathrm{pulse}}\) (other parameters at the default scenario). 
%   % For high \(f_{\mathrm{pulse}}\), fidelity is flat in \(K\); for low \(f_{\mathrm{pulse}}\), larger \(K\) degrades fidelity.
%   }
% \label{fig:fidelity_heatmap}
% \end{figure}

\begin{figure}[t]
  \centering
  \begin{minipage}[t]{0.49\linewidth}
    \centering
    \includegraphics[width=\linewidth]{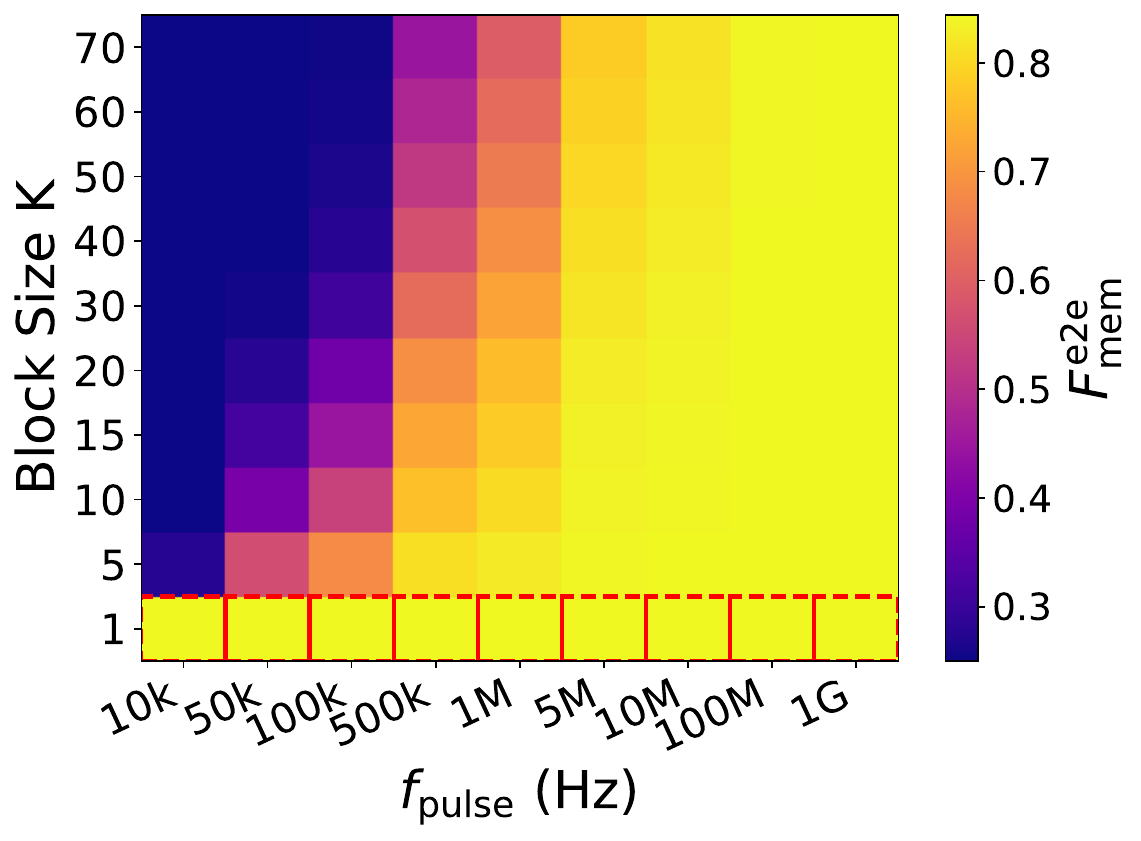}
    \vspace{-0.9cm}
    \captionof{figure}{End-to-end fidelity heatmap versus block size \(K\) and \(f_{\mathrm{pulse}}\).}
    \label{fig:fidelity_heatmap}
  \end{minipage}\hfill
  \begin{minipage}[t]{0.49\linewidth}
    \centering
    \includegraphics[width=\linewidth]{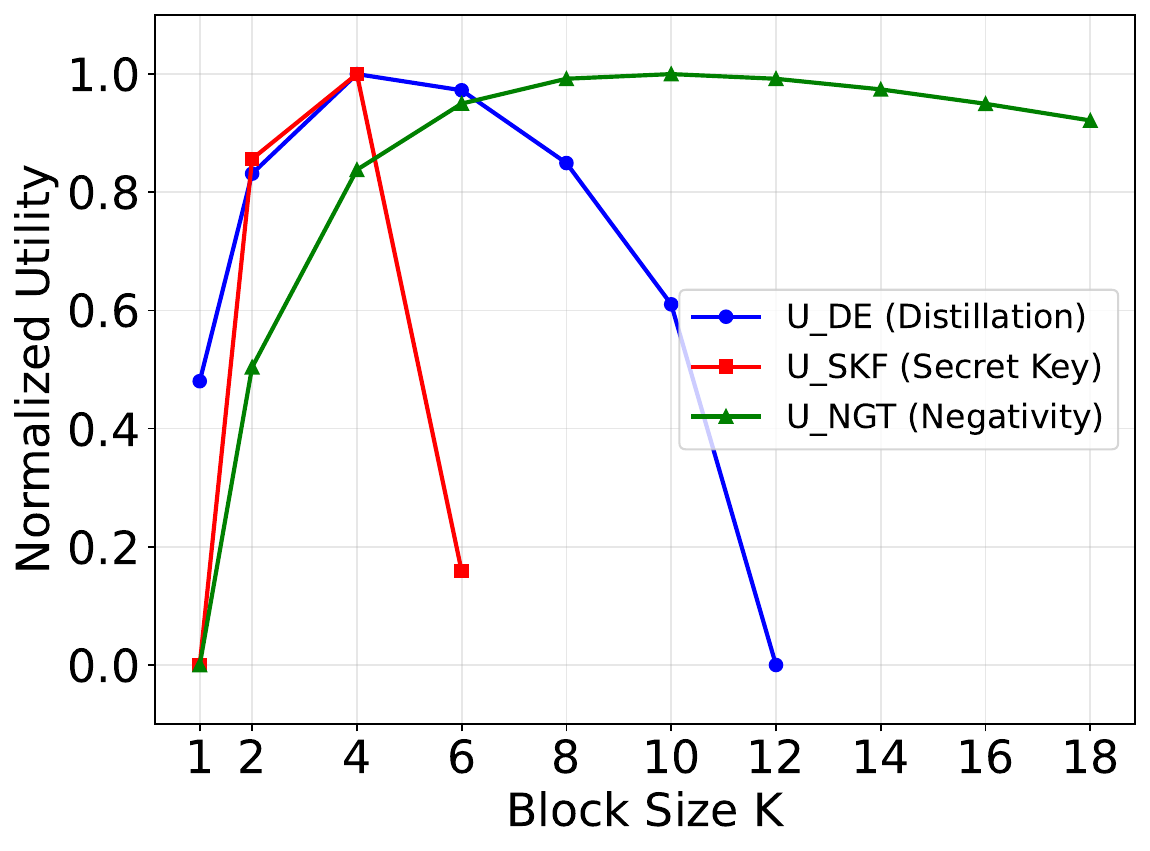}
    \vspace{-0.9cm}
    \captionof{figure}{Utilities versus block size \(K\).}
    \label{fig:utilities_vs_K}
  \end{minipage}
\end{figure}

%camera-ready
%remove here for length
% \textbf{End-to-end rate/fidelity trade-off via utilities:}
% To study the joint trade-off between end-to-end entanglement \emph{rate} \(R\) and \emph{fidelity} \(F\), we evaluate proportional-fair (log) utilities of the form \(U=\log\!\big(R\,Q(F)\big)\). We use three application-motivated choices for \(Q(F)\): (i) \emph{Distillation utility} \(U_{\mathrm{DE}}=\log\!\big(R\,D_H(F)\big)\) with hashing yield \(D_H(F)=1+F\log_2 F+(1-F)\log_2\!\frac{1-F}{3}\) (asymptotic one-way distillable entanglement); (ii) \emph{Secret-key utility} \(U_{\mathrm{SKF}}=\log\!\big(R\,S_{\mathrm{BB84}}(F)\big)\), where \(p=\tfrac{2}{3}(1-F)\), \(h(p)=-p\log_2 p-(1-p)\log_2(1-p)\), and \(S_{\mathrm{BB84}}(F)=\max\{1-2h(p),0\}\) (BB84 secret-key fraction); and (iii) \emph{Negativity utility} \(U_{\mathrm{NGT}}=\log\!\big(R\,\max\{F-\tfrac12,0\}\big)\) (scaled entanglement negativity above the separability threshold). These utilities \cite{vardoyan2023quantum} provide a single objective that balances throughput with quality and allows comparing hardware/model choices on equal footing. For $U_{DE}$ and $U_{SKF}$ to be nonnegative we need fidelities greater than $\sim 0.85$ and as observed in Fig.~\ref{fig:fidelity_heatmap}, the default scenario can give $F_{\mathrm{mem}}^{\mathrm{e2e}} \in [0.3, 0.8]$. For that reason, below we use $L = 0.1km$. 
\textbf{End-to-end rate/fidelity trade-off via utilities:}
To study the joint trade-off between end-to-end entanglement \emph{rate} \(R\) and \emph{fidelity} \(F\), we use proportional-fair utilities of the form \(U=\log\!\big(R\,Q(F)\big)\), following \cite{vardoyan2023quantum}. We consider three application-motivated choices for \(Q(F)\) and thus $U$: (i) \emph{distillable entanglement}, $U_{\mathrm{DE}}$, with \(Q(F)=D_H(F)\), the hashing yield (asymptotic one-way distillable entanglement); (ii) \emph{secret-key utility}, $U_{\mathrm{SKF}}$, with \(Q(F)=S_{\mathrm{BB84}}(F)\), the BB84 secret-key fraction; and (iii) \emph{negativity utility}, $U_{\mathrm{NGT}}$, with \(Q(F)=\max\{F-\tfrac12,0\}\), a scaled entanglement negativity above the separability threshold. These utilities provide a single objective that balances throughput with quality and allow a comparison of hardware/model choices on equal footing. Since \(U_{\mathrm{DE}}\) and \(U_{\mathrm{SKF}}\) are only relevant for high fidelities (\(F\gtrsim 0.85\)), while the default scenario yields \(F_{\mathrm{mem}}^{\mathrm{e2e}}\in[0.3,0.8]\) (Fig.~\ref{fig:fidelity_heatmap}), below we use \(L=0.1\,\mathrm{km}\).
% here figure2
% \begin{figure}[t]
%   \centering\includegraphics[width=0.7\linewidth]{Figures/mem_model_utilities/utilities_vs_batch_size_beta_0.05tohavelargefidelities.pdf}
%   \vspace{-0.3cm}
%   \caption{Utilities versus batch size \(K\). 
%   The optimal \(K^\star\) depends on the chosen utility (application) and hardware parameters.}
%   \label{fig:utilities_vs_K}
% \end{figure}

Fig.~\ref{fig:utilities_vs_K} plots \(U_{\mathrm{DE}}, U_{\mathrm{SKF}}, U_{\mathrm{NGT}}\) versus block size \(K\). For comparable y-axis, we normalize the values of the utilities from 0 to 1. All three exhibit a unimodal, concave-like dependence, yielding a best block size \(K^\star\). \(K^\star\) varies with the utility (i.e., target application) and the hardware parameters, enabling application-aware tuning of the parameters of the quantum switch. In the current setting, all three utilities preserve an optimal $K^\star \in [4,10]$.

\subsection{Comparing EGS and Memory Switching}
\label{sec:comparison}

% We compare the EGS model and the quantum memory-equipped switch (Models 1 and 2 respectively). Our goal is to identify which architecture \emph{dominates} as the parameter regime changes and to provide guidance for hardware-model co-design. We use two complementary lenses: (i) the \((R,F)\) \emph{fidelity–rate frontier} obtained by sweeping the loss parameter \(\beta\), and (ii) an application-oriented \emph{negativity utility} $U_{\mathrm{NGT}} \;=\; \log\!\big(R\,\max\{F-\tfrac{1}{2},\,0\}\big),$
% which compresses each operating point into a single score aligned with entanglement usefulness.

% Section~\ref{sssec:baseline_frontier_negativity} establishes the baseline dominance at the default operating point; Section~\ref{sssec:scenario_sensitivity} then varies key hardware knobs to map where each model is preferable.

In this section, we compare the EGS and the quantum memory-equipped switch. Our goal is to identify which architecture \emph{dominates} as the parameter regime changes and to provide guidance for hardware–model co-design. Although the results are specific to the chosen parameter regimes, our framework is general and can be applied to other parameters of interest for comparison. We use two complementary lenses: (i) the \((R,F)\) \emph{fidelity–rate frontier} obtained by sweeping the tuning parameter \(\beta\), and (ii) an application-oriented \emph{negativity utility} \(U_{\mathrm{NGT}}=\log\!\big(R\,\max\{F-\tfrac{1}{2},0\}\big)\), which compresses each operating point into a single score aligned with entanglement usefulness. Whenever the memory model is used, we choose the optimal block size \(K^\star\) for each setting— \(K^\star(\beta)\in \arg\max_K U_{\mathrm{NGT}}(K,\beta)\). Thus \(K^\star\) is case dependent.

% Section~\ref{sssec:baseline_frontier_negativity} establishes the baseline dominance at the default operating point; Section~\ref{sssec:scenario_sensitivity} then varies key hardware knobs to study the robustness of the findings.

\begin{figure}[t]
  \centering
  \begin{minipage}{0.5\columnwidth}
    \centering
    \includegraphics[width=\linewidth]{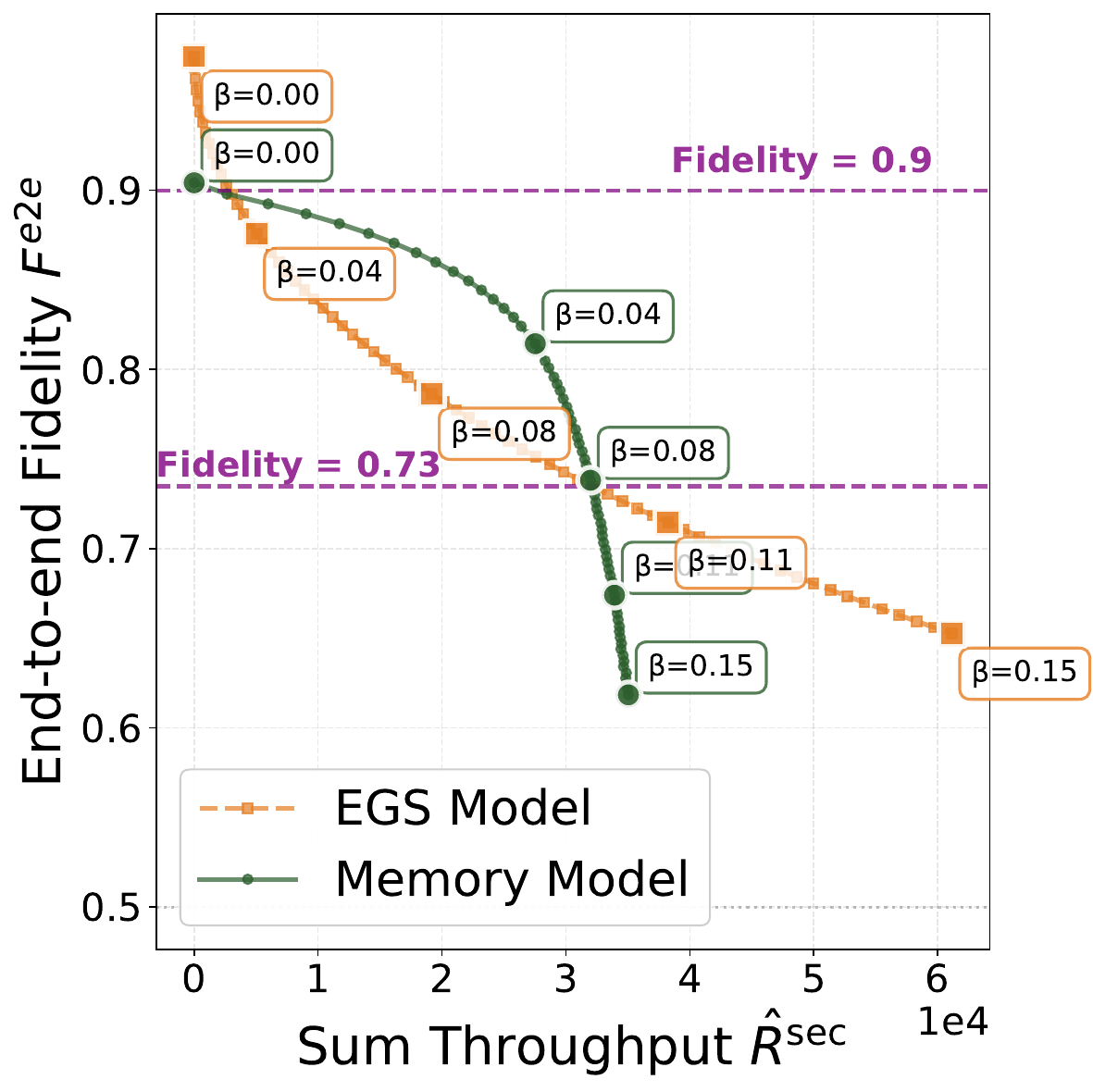}
    \small (a) Fidelity–rate frontiers w.r.t $\beta$.
  \end{minipage}\hfill
  \begin{minipage}{0.5\columnwidth}
    \centering
    \includegraphics[width=\linewidth]{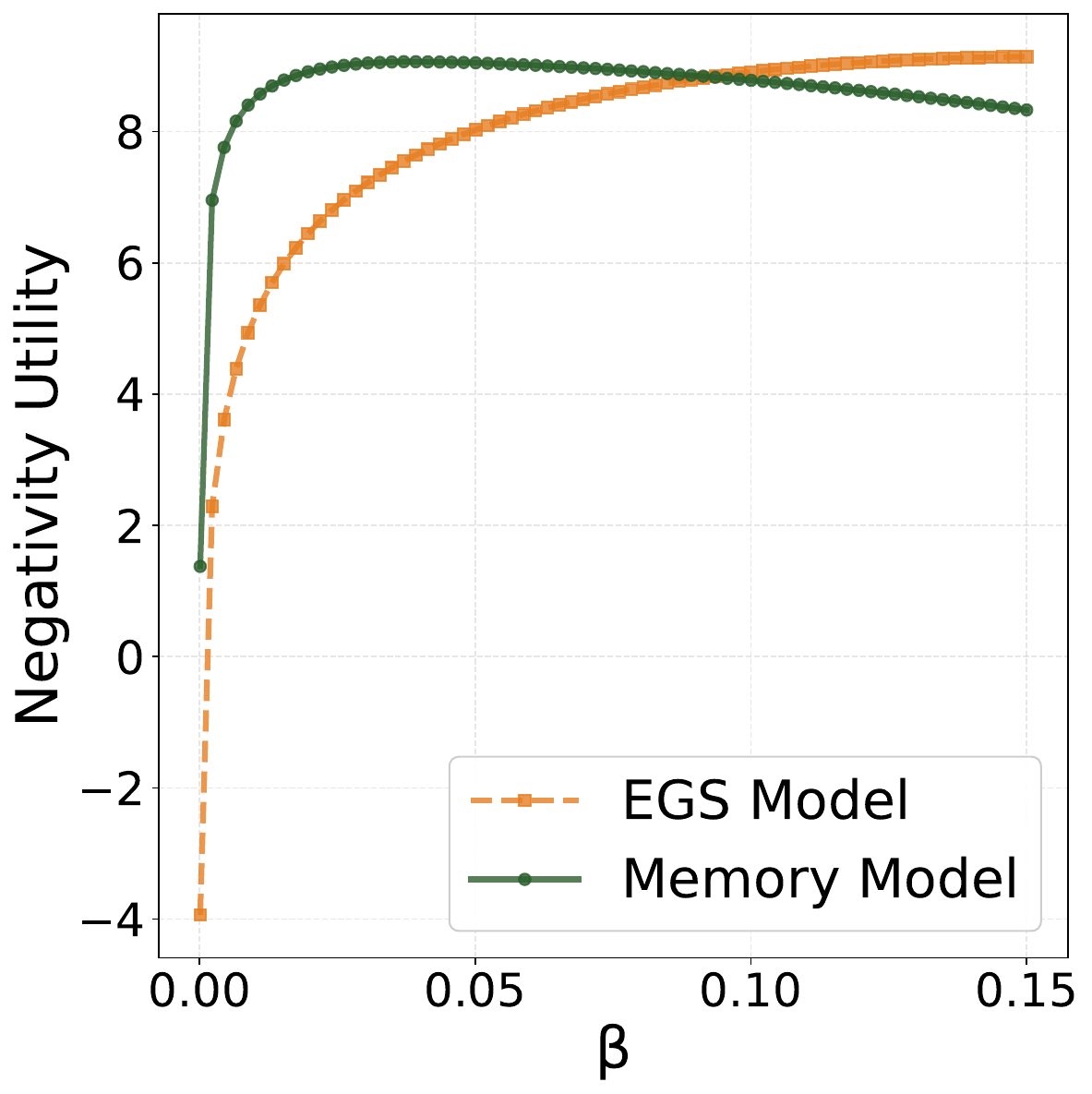}
    \small (b) Negativity utility vs.\ $\beta$.
  \end{minipage}
    \vspace{-0.4cm}
  \caption{Baseline comparison of EGS and memory-equipped switching.}
  \label{fig:rf_frontier}
\end{figure}

\subsubsection{Fidelity–Rate Frontier and Negativity vs.  $\beta$}
\label{sssec:baseline_frontier_negativity}

We sweep the parameter \(\beta \in [0,\,0.15]\) and we compute the achievable normalized end-to-end rate \(\hat{R}^{\mathrm{sec}}_{\cdot}\) and fidelity \(F^{\mathrm{e2e}}_{\cdot}\) for the EGS and quantum memory-equipped switches.

% \begin{figure}[t]
%   \centering
%   \begin{subfigure}[t]{0.24\textwidth}
%     \centering
%     \includegraphics[width=\linewidth]{Figures/sec2_a_baseline_fidelityrateandutilities/rate_fidelity_tradeoff.pdf}
%     \caption{Fidelity–rate frontiers as \(\beta\) is swept. Example labels mark selected \(\beta\) values; a horizontal reference at \(F=0.85\) is shown for reading off application thresholds.}
%     \label{fig:rf_frontier}
%   \end{subfigure}\hfill
%   \begin{subfigure}[t]{0.24\textwidth}
%     \centering
%     \includegraphics[width=\linewidth]{Figures/sec2_a_baseline_fidelityrateandutilities/negativity_utility_vs_beta.pdf}
%     \caption{Negativity utility \(U_{\mathrm{NGT}}=\log\!\big(R\,\max\{F-\tfrac12,0\}\big)\) versus \(\beta\) for both models.}
%     \label{fig:negativity_beta}
%   \end{subfigure}
%   \vspace{-0.5em}
%   \caption{Baseline comparison of EGS (Model~1) and quantum memory-equipped (Model~2) switching under the default scenario.}
% \end{figure}

\begin{figure*}[t]
  \centering
  % --- (a) ---
  \begin{minipage}{0.29\textwidth}
    \centering\includegraphics[width=0.9\linewidth,height=0.14\textheight]{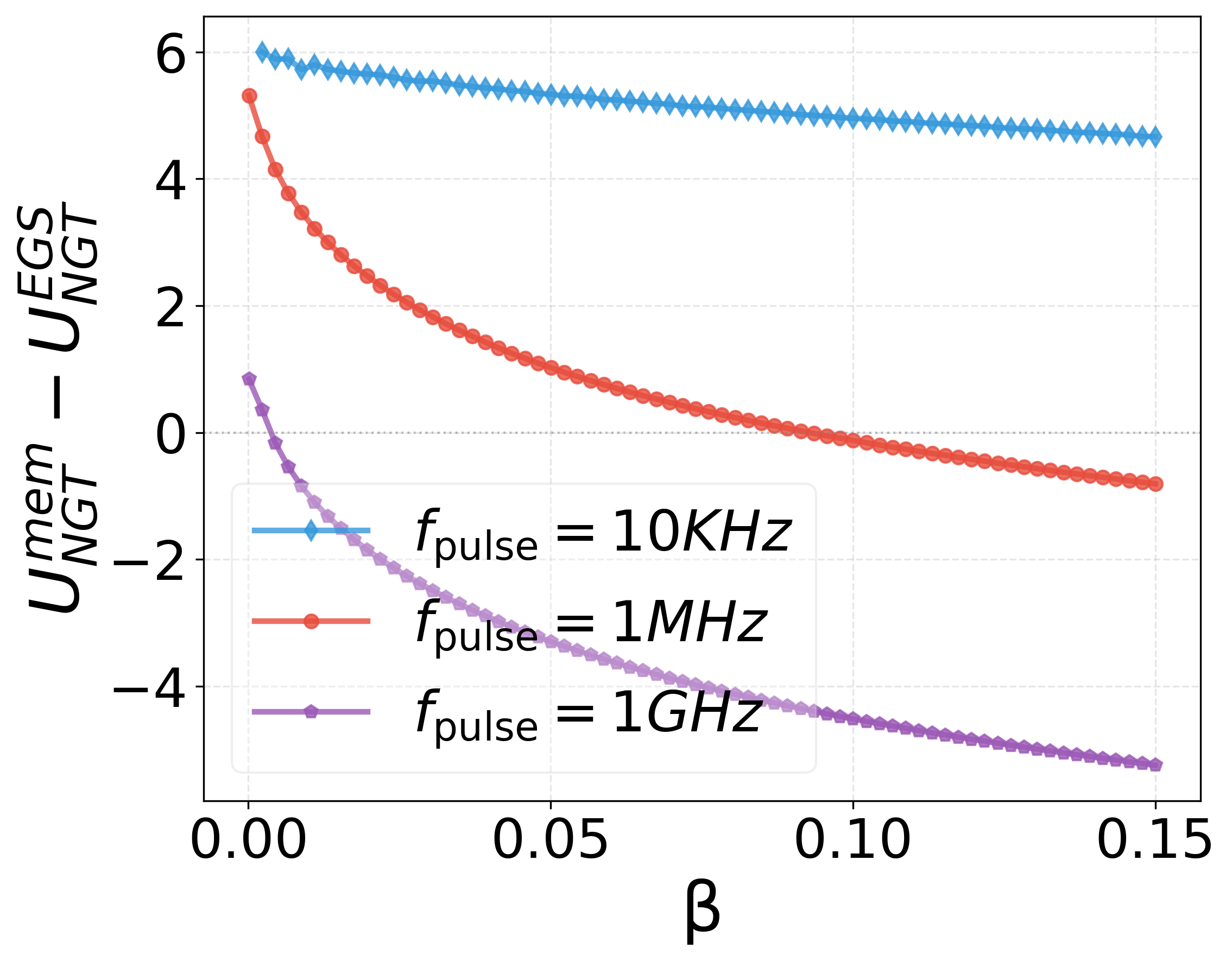} \vspace{-0.1pt}
    \vspace{-0.8ex}
    \small (a)~Varying source rate $f_{\mathrm{pulse}}$.
  \end{minipage}\hfill
  % --- (b) ---
  \begin{minipage}{0.29\textwidth}
    \centering\includegraphics[width=0.9\linewidth,height=0.14\textheight]{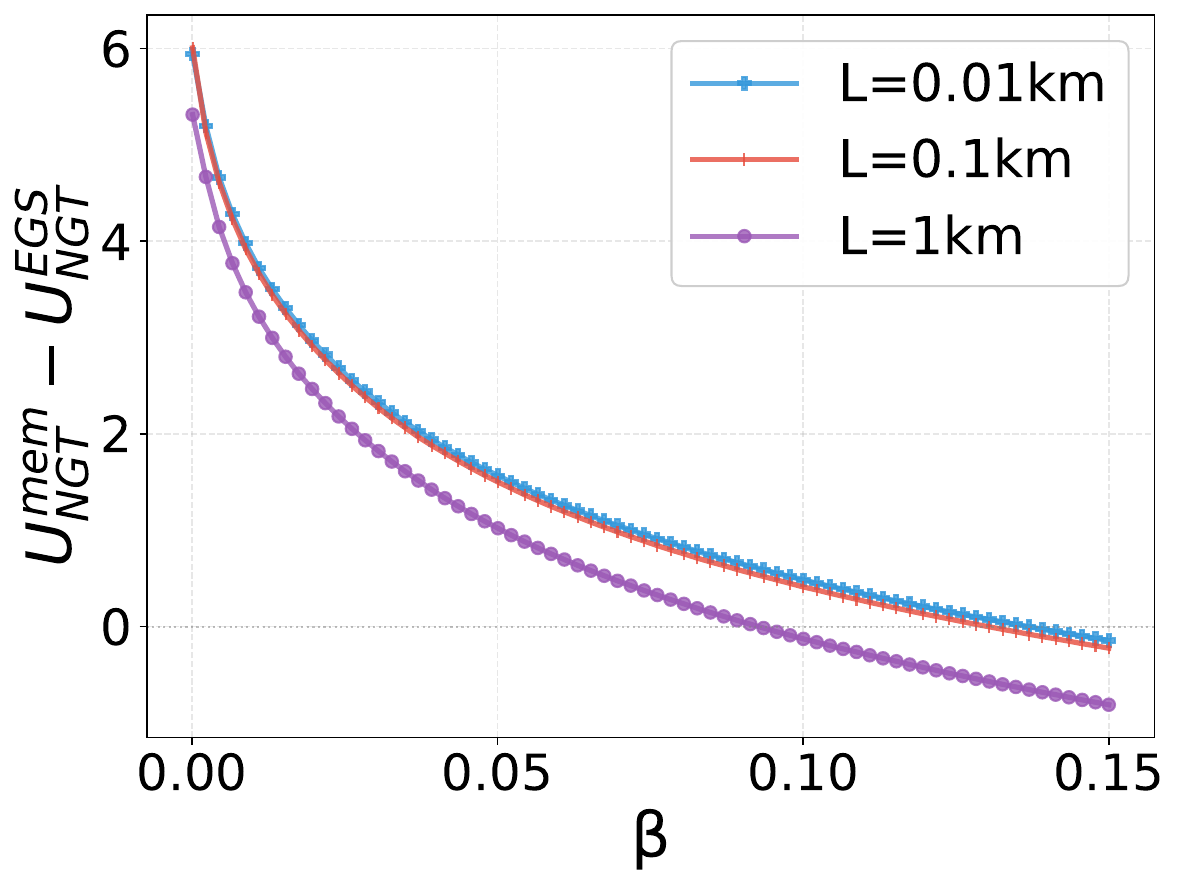} \vspace{-0.1pt}
    \vspace{-0.8ex}
    \small \qquad (b) Varying fiber length $L$.
  \end{minipage}\hfill
  % --- (c) ---
  \begin{minipage}{0.29\textwidth}
    \centering\includegraphics[width=0.9\linewidth,height=0.14\textheight]{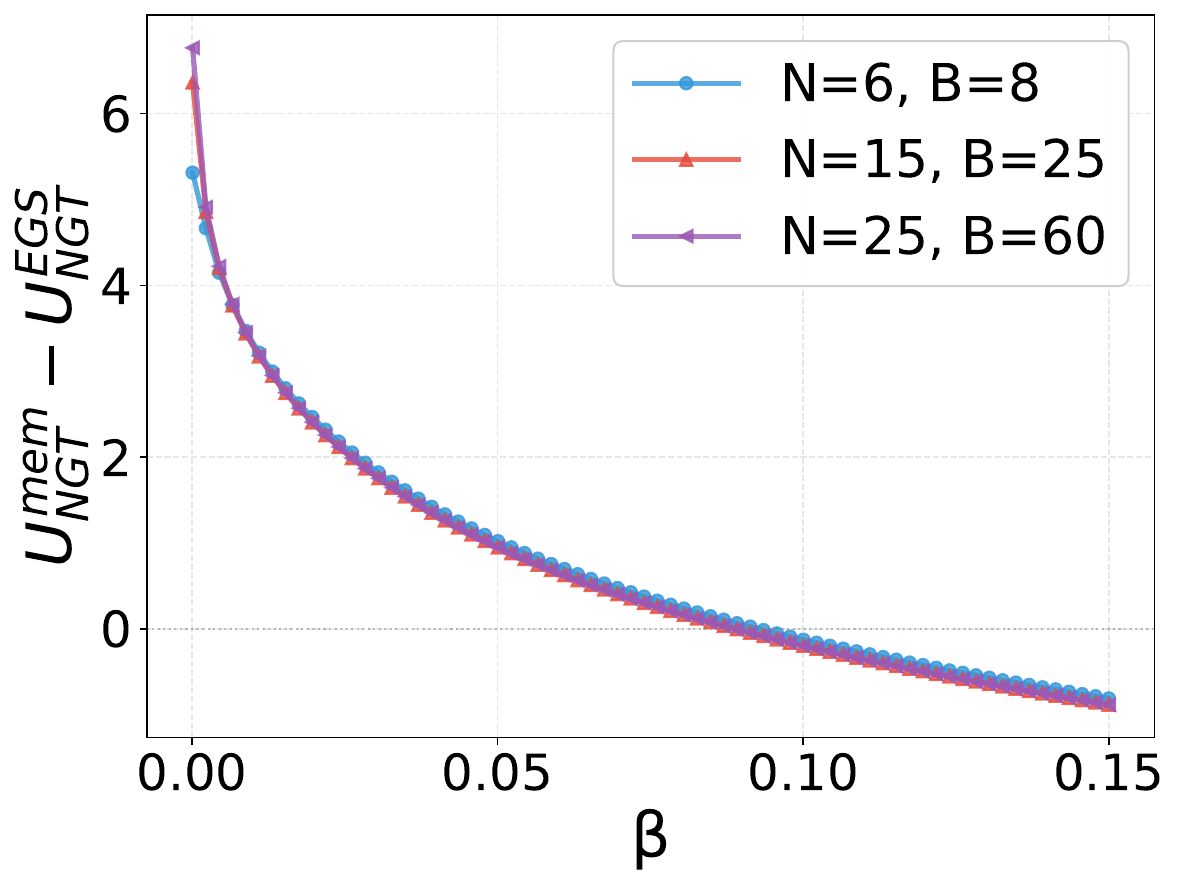} \vspace{-0.1pt}
    \vspace{-0.8ex}
    \small \quad (c) Scaling the switch size $(N,B)$.
  \end{minipage}
  \vspace{-0.6ex}
  \caption{Sensitivity of dominance using $\Delta U_{\mathrm{NGT}}(\beta)$. Positive values favor the memory model; negative values favor EGS.}
  \label{fig:panel_sensitivity}
\end{figure*}

In Fig.~\ref{fig:rf_frontier}(a) observe that the curves trace the set of \((R,F)\) versus \(\beta\).
Across large $\beta$, EGS achieves \emph{both} higher rate and higher fidelity. However, in an intermediate high-fidelity window (\(0.73 \lesssim F \lesssim 0.9\)), the memory model becomes preferable: for the same target fidelity it delivers a higher rate. For very high fidelities (\(F \gtrsim 0.9\)) the memory model cannot reach the EGS curve because decoherence in the switch memories caps its attainable fidelity under this parameter set. In practice, an application can place a horizontal line at its required fidelity and read the \emph{achievable rate} from Fig.~\ref{fig:rf_frontier}(a). 
% The memory model offers the better rate at fidelities in the range $[0.75,0.92]$, while EGS wins once the required \(F\) is relaxed or pushed to very high values. 
In Fig.~\ref{fig:rf_frontier}(b), the negativity utility shows that for large \(\beta\) (corresponding to lower fidelities in Fig.~\ref{fig:rf_frontier}(a)) EGS yields higher \(U_{\mathrm{NGT}}\), whereas for smaller \(\beta\) the memory model, with block size optimized for utility, attains higher \(U_{\mathrm{NGT}}\) in the high-fidelity regime.
% In Fig.~\ref{fig:rf_frontier}(b), note that the negativity utility leads to the same conclusion:
% there exists a \(\beta\)-region (lower loss) where the memory model has larger \(U_{\mathrm{NGT}}\), reflecting its advantage at useful fidelities.
% As \(\beta\) increases, EGS becomes better in \(U_{\mathrm{NGT}}\); however, this happens at \(\beta\) where the resulting fidelities are too low to be acceptable for many applications.
Together, the two plots indicate that \emph{model preference is regime-dependent}.

\subsubsection{Regime Sensitivity}
\label{sssec:scenario_sensitivity}

We vary key parameters to see how the ordering between EGS and the quantum memory-equipped switch changes. To keep a single, comparable score per operating point, we plot the \emph{difference in negativity utility} \(\Delta U_{\mathrm{NGT}} \triangleq U_{\mathrm{NGT}}^{\mathrm{mem}} - U_{\mathrm{NGT}}^{\mathrm{EGS}}\) across \(\beta\),
% other utilities could be substituted depending on the application.
and we indicate for each case, the fidelity interval over which the memory-equipped architecture dominates, inferred from the corresponding \((R,F)\) plot as in Fig.~\ref{fig:rf_frontier}(a). The \((R,F)\) plots are omitted for space.

\textbf{Source rate (Fig.~\ref{fig:panel_sensitivity}(a)).}
Relative to the baseline, \emph{slower} sources shift \(\Delta U_{\mathrm{NGT}}\) upward. The memory-equipped switch \emph{dominates} because the slot is no longer a bottleneck and connectivity learning boosts rate while keeping fidelity high; for \(f_{\mathrm{pulse}}=10\,\mathrm{kHz}\) this corresponds to dominance over \(0.65 \lesssim F \lesssim 0.92\). With \emph{fast} sources (e.g., \(1\,\mathrm{GHz}\)), \(\Delta U_{\mathrm{NGT}}<0\) across all \(\beta\) (EGS \emph{dominates}); the memory model wastes many entangled pairs awaiting for the heralding. However, this dominance assumes that nodes can store all states while awaiting heralding, which may be impractical at high $f_{\mathrm{pulse}}$.
% from the nodes while the EGS is boosted.

\textbf{Length (Fig.~\ref{fig:panel_sensitivity}(b)).}
Increasing $L$ favors the memory-equipped switch and the \(\beta\)-interval where it dominates widens. Shorter heralding paths reduce decoherence and effective slot duration, improving rate and fidelity for the memory design; in $L=\{0.1,0.01\}km$ it outperforms EGS for \(0.66 \lesssim F \lesssim 0.93\).

\textbf{Switch size (Fig.~\ref{fig:panel_sensitivity}(c)).}
For larger switches, the curves are largely consistent with the baseline, supporting robustness of the conclusions. 

% with a mild tilt toward EGS (slightly more negative \(\Delta U_{\mathrm{NGT}}\)) at larger switches.

%todo 2) model 2 with memory 3) model 3 with fiber delay lines 4) reallocate memories 5) comparison

%in delay lines can be the pump since many in delay lines?. 
%what if you can reallocate the memories every t_freeze depending on the state. question how many memories would we give to nodes? 

\section{Conclusion}

This work provides a unified framework to characterize the rate-fidelity tradeoffs between all-photonic and memory-equipped quantum switches. We demonstrate that while internal memories enable efficient herald-then-swap control, their advantage is regime-dependent, governed by the interplay of coherence times, heralding delays, and source rates. Future work should expand this analysis beyond near-term switches to include high-stability memories capable of supporting entanglement distillation and error correction, as well as exploring alternative switch architectures and LLEG protocols.

\section{Appendix}
\label{sec:appendix}

\ifextension
\subsection{Proof of Remark~\ref{remark:sum_throughtput_symmetric_load_connection}}
\label{sec:appendix_remark_symmetric}
To prove Remark~\ref{remark:sum_throughtput_symmetric_load_connection}, we use the following lemma.

\begin{lemma}
\label{lemma:symmetric_convex_set}
Let $d \in \mathbb{N}$ and let $\Lambda \subset \mathbb{R}^d$ be a non-empty, convex set such that $x \in \Lambda \ \Rightarrow\ P x \in \Lambda$ for every permutation matrix $P \in \mathbb{R}^{d \times d}$. Define $S^\star \;\triangleq\; \max_{x \in \Lambda} \sum_{k=1}^d x_k
\quad\text{and}\quad
\kappa^\star \;\triangleq\; \max\{\kappa \in \mathbb{R} : \kappa \mathbf{1} \in \Lambda\},$
where $\mathbf{1} \in \mathbb{R}^d$ is the all-ones vector. Assume $S^\star < +\infty$ and that the maximum is attained. Then
\[
\kappa^\star \;=\; \frac{S^\star}{d}.
\]
\end{lemma}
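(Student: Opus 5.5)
We prove the two inequalities $\kappa^\star \le S^\star/d$ and $\kappa^\star \ge S^\star/d$ separately. The first is immediate. If $\kappa\mathbf{1}\in\Lambda$, then $d\kappa = \sum_k (\kappa\mathbf{1})_k \le S^\star$ by the definition of $S^\star$. Hence every feasible $\kappa$ satisfies $\kappa \le S^\star/d$. Since the feasible set is nonempty (shown below), $\kappa^\star$ is well defined and bounded by $S^\star/d$.

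For the reverse inequality we symmetrize a maximizer. Let $x^\star\in\Lambda$ attain $\sum_k x^\star_k = S^\star$, which exists by assumption. Let $\mathcal{P}$ be the group of all $d\times d$ permutation matrices, with $|\mathcal{P}| = d!$. By hypothesis $Px^\star\in\Lambda$ for every $P\in\mathcal{P}$. Define the average
\[
\bar x \;\triangleq\; \frac{1}{d!}\sum_{P\in\mathcal{P}} P x^\star .
\]
This is a convex combination of points of $\Lambda$, so convexity gives $\bar x\in\Lambda$.

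We next compute $\bar x$ coordinatewise. For a fixed coordinate $k$ and a fixed index $l$, exactly $(d-1)!$ permutations send $l$ to position $k$. Therefore
\[
\bar x_k = \frac{(d-1)!}{d!}\sum_{l=1}^d x^\star_l = \frac{S^\star}{d}
\]
for every $k$, so $\bar x = (S^\star/d)\mathbf{1}$. This shows that $\kappa = S^\star/d$ is feasible, so $\kappa^\star \ge S^\star/d$. Combined with the first inequality, $\kappa^\star = S^\star/d$, and the maximum defining $\kappa^\star$ is attained.

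There is no real obstacle in this argument; the step needing care is the orbit-averaging count. An equivalent route avoids the count: average only over the $d$ cyclic shifts of $x^\star$. Each coordinate then receives every entry $x^\star_l$ exactly once, giving the same $\bar x$. The same argument yields Remark~\ref{remark:sum_throughtput_symmetric_load_connection} once $\Lambda_{\mathrm{EGS}}$ is known to be permutation invariant, which holds under homogeneity.
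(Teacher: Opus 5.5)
Your proof is correct and follows essentially the same route as the paper's: the bound $d\kappa \le S^\star$ for every feasible $\kappa$, then averaging the maximizer $x^\star$ over all permutation matrices and invoking convexity to get $(S^\star/d)\mathbf{1}\in\Lambda$. The one difference is how you show $\bar x$ has equal coordinates: you count directly (the factor $(d-1)!/d!$, or the cyclic-shift variant), whereas the paper uses permutation invariance of $\bar x$ plus preservation of the coordinate sum. Your count is slightly more explicit, and your remark that the maximum defining $\kappa^\star$ is actually attained is a welcome addition.
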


\begin{proof}
We prove the two inequalities $\kappa^\star \le S^\star/d$ and $\kappa^\star \ge S^\star/d$. First, we prove that $\kappa^\star \le S^\star/d$. Let $\kappa \in \mathbb{R}$ be such that $\kappa \mathbf{1} \in \Lambda$. Then $\sum_{k=1}^d (\kappa \mathbf{1})_k \;=\; d \kappa \;\le\; S^\star,$ by definition of $S^\star$. Hence $\kappa \le S^\star / d$ for every feasible $\kappa$, which implies
$\kappa^\star \le S^\star/d$.

Then, we prove that $\kappa^\star \ge S^\star/d$.
Let $x^\star \in \Lambda$ be a maximizer of $\sum_{k=1}^d x_k$, so $\sum_{k=1}^d x^\star_k \;=\; S^\star.$ For every permutation matrix $P$, the symmetry assumption implies $P x^\star \in \Lambda$.
Let $\mathcal{U}$ be the (finite) set of all permutation matrices of size $d \times d$ and consider the averaged vector $\bar{x} \;\triangleq\; \frac{1}{|\mathcal{U}|} \sum_{P \in \mathcal{U}} P x^\star.$ By convexity of $\Lambda$, we have $\bar{x} \in \Lambda$. Moreover, each component of $\bar{x}$ is equal: for any indices $k,\ell$, there exists a permutation $P \in \mathcal{U}$ that maps coordinate k to $\ell$. Hence all coordinates of $\bar{x}$ have the same value, i.e., there exists $\kappa \in \mathbb{R}$ such that $\bar{x} = \kappa \mathbf{1}$. Moreover, permutation matrices preserve the sum of components, so $\sum_{k=1}^d (P x^\star)_k \;=\; \sum_{k=1}^d x^\star_k \;=\; S^\star,
\forall P \in \mathcal{U}.$ Taking the average over $P$ gives
\[
\sum_{k=1}^d \bar{x}_k
\;=\;
\frac{1}{|\mathcal{U}|} \sum_{P \in \mathcal{U}} \sum_{k=1}^d (P x^\star)_k
\;=\;
S^\star.
\]
Since $\bar{x} = \kappa \mathbf{1}$, we obtain $d \kappa = S^\star$, i.e., $\kappa \;=\; \frac{S^\star}{d}.$ Because $\bar{x} \in \Lambda$, this shows that $\kappa \mathbf{1} \in \Lambda$ for $\kappa = S^\star/d$, so
$\kappa^\star \ge S^\star/d$.
\end{proof}

\begin{proof}[Proof of Remark~\ref{remark:sum_throughtput_symmetric_load_connection}]
Recall that
\[
\Lambda_{\mathrm{EGS}} = \frac{p_{\mathrm{e2e}}(\beta)}{\Delta t}\, \mathrm{co}(\mathcal{X})
\subset \mathbb{R}^{F}.
\]
Since we assume that the switch is homogeneous across node pairs, $\Lambda_{\mathrm{EGS}}$ is invariant under permutations of its coordinates (i.e., under relabelings of the pairs $(i,j) \in \mathcal{F}$). That means that $\Lambda_{\mathrm{EGS}}$ is a non-empty, convex, permutation-invariant polytope in dimension
$d \triangleq F$. By definition, $R^{\mathrm{sec}}_{\mathrm{EGS}} = \max_{\lambda \in \Lambda_{\mathrm{EGS}}} \sum_{(i,j)\in\mathcal{F}} \lambda_{ij},
$
so we can identify $S^\star = R^{\mathrm{sec}}_{\mathrm{EGS}}$ in Lemma~\ref{lemma:symmetric_convex_set} for $\Lambda = \Lambda_{\mathrm{EGS}}$. Moreover, for $\Lambda = \Lambda_{\mathrm{EGS}}$ the scalar $\kappa^\star \;\triangleq\; \max\{\kappa \in \mathbb{R} : \kappa \mathbf{1} \in \Lambda_{\mathrm{EGS}}\},$ is exactly the quantity defined in Lemma~\ref{lemma:symmetric_convex_set}. Hence, applying Lemma~\ref{lemma:symmetric_convex_set} gives $\kappa^\star \;=\; \frac{R^{\mathrm{sec}}_{\mathrm{EGS}}}{F},$ which proves the claim.
\end{proof}
\else
\fi

%%%%%%%%%%%%%%%%%%%%%%%%%%%%%%%%%%%%%%%%%%%%%%

\subsection{Proof of Lemma~\ref{lem:bmatching}}
\label{sec:appendix_lemma_cardinality}

\begin{proof}
Write $E^\star \triangleq \max_{x\in\mathcal{X}}\sum_{(i,j)\in \mathcal{F}}x_{ij}$ and $T\triangleq \sum_i S_i$. We prove the lemma in two steps, first we prove that the maximum cardinality of a feasible allocation is upper bounded by $E_{\mathrm{EGS}}^{\max}$, and finally that it can achieve that value.

\emph{(I) Upper bounds:}
Every feasible allocation uses one station per selected edge, hence $E^\star \le B.$ Each edge consumes one unit of capacity at two endpoints, hence
% \begin{equation}
%     \label{UB2}
%     2\sum_{i<j}x_{ij} = \sum_i \sum_{j\neq i} x_{ij} \;\le\; \sum_i S_i \;=\; T
% \Rightarrow
% E^\star \le \Big\lfloor \tfrac{T}{2} \Big\rfloor. 
% \end{equation}
\begin{equation}
    \label{UB2}
    2\sum_{(i,j)\in \mathcal{F}}x_{ij}  \;\le\; \sum_i S_i \;=\; T
\Rightarrow
E^\star \le \Big\lfloor \tfrac{T}{2} \Big\rfloor. 
\end{equation}
% Finally, at most $S_{\max}$ incidents can be assigned to the heaviest node, and the remaining $T-S_{\max}$ units of capacity reside at the other nodes. Even if \emph{all} of that remaining capacity is paired with the heaviest node, we cannot realize more than $T-S_{\max}$ edges.
Finally, at most $S_{\max}$ edges can use the node with capacity $S_{\max}$, and the remaining $T-S_{\max}$ units of capacity reside at the other nodes. Even if all of that remaining capacity is paired with this node, we cannot realize more than $T-S_{\max}$ edges. Therefore $E^\star \le T - S_{\max}$
% \begin{equation}
%     \label{UB3}
%     E^\star \le T - S_{\max}.
% \end{equation}
and hence $E^\star \le E^{\max}_{\mathrm{EGS}}.$

\emph{(II) Achievability:}
We wish to construct an integer vector $x\in\mathcal{X}$ attaining $E^{\max}_{\mathrm{EGS}}$. For now let us ignore $B$ as a first step to construct the vector until we reintroduce it later in the proof of achievability.
Consider a graph, in which for every node $i$ we create $S_i$ vertices. A feasible allocation corresponds to pairing vertices that correspond to distinct nodes, into unordered pairs; each pair $(i,j)$ contributes one to $x_{ij}$ (multiple pairs between the same $(i,j)$ are allowed) and consumes one unit of residual capacity at both $i$ and $j$. 

We show that, without the station budget constraint $\sum_{(i,j)\in \mathcal{F}}x_{ij}\le B$, the maximum number of such pairs equals
\begin{equation}
    \label{eq:e_tilda}
\widetilde{E}\;=\;\min\!\left\{ \Big\lfloor \tfrac{T}{2} \Big\rfloor,\; T-S_{\max}\right\}.
\end{equation}
\emph{Case 1: $T-S_{\max} \le \lfloor T/2\rfloor$.}
Pick a node $i^\star$ attaining $S_{i^\star}=S_{\max}$. Pair every unit of capacity on the other nodes with $i^\star$. 
% (arbitrarily distributing repeated edges across $(i^\star,j)$, $j\neq i^\star$). 
This uses $T-S_{\max}$ pairs and respects per-node capacities: $i^\star$ contributes $T-S_{\max}\le S_{\max}$ pairs, each other node $j$ contributes at most $S_j$, and no self-pairs occur. Thus $\widetilde{E}\ge T-S_{\max}$, and by the upper bounds, equality holds.

\emph{Case 2: $T-S_{\max} > \lfloor T/2\rfloor$.}
Equivalently, $S_{\max}<\lceil T/2\rceil$. We prove that $\lfloor T/2\rfloor$ pairs are always achievable via the following greedy rule: \emph{Every node $i$ starts with residual (i.e., unassigned) capacity $S_i$. At each step, select two distinct nodes $u\neq v$ with \emph{maximum} residual capacities (ties arbitrary), add one to $x_{uv}$, and decrement both residual capacities by $1$. Stop when fewer than two nodes have positive residual capacity.}

% \begin{quote}
% At each step, select two distinct nodes $u\neq v$ with \emph{maximum} residual capacities (ties arbitrary), add one to $x_{uv}$, and decrement both residual capacities by $1$. Stop when fewer than two nodes have positive residual.
% \end{quote}

Let the residual capacities after $k$ steps be $r^{(k)}_1\ge r^{(k)}_2\ge \dots \ge r^{(k)}_n\ge 0$, and let $T^{(k)}=\sum_i r^{(k)}_i=T-2k$.
Define the imbalance $\Delta^{(k)} \;\triangleq\; r^{(k)}_1 \;-\; \sum_{i\ge 2} r^{(k)}_i.$
% \[
% \Delta^{(k)} \;\triangleq\; r^{(k)}_1 \;-\; \sum_{i\ge 2} r^{(k)}_i.
% \]
Since $S_{\max}<\lceil T/2\rceil$, $\Delta^{(0)} \;=\; S_{\max}-(T-S_{\max}) < T -2\lfloor T/2 \rfloor \;<\; 0.$
% \[
% \Delta^{(0)} \;=\; S_{\max}-(T-S_{\max}) < T -2\lfloor T/2 \rfloor \;<\; 0.
% \]
Each greedy step decreases $r^{(k)}_1$ and $\sum_{i\ge 2} r^{(k)}_i$ by exactly $1$, hence $\Delta^{(k+1)}=\Delta^{(k)}$ for all $k$ while the process runs.

Therefore $\Delta^{(k)}<0$ throughout. If at some stage $T^{(k)}\ge 2$ but $r^{(k)}_2=0$, then only one node is positive, so $\sum_{i\ge 2} r^{(k)}_i=0$ and $r^{(k)}_1>0$, implying $\Delta^{(k)}>0$, a contradiction. Thus whenever $T^{(k)}\ge 2$ we must have $r^{(k)}_2\ge 1$, and the greedy step is always feasible. Consequently, the algorithm performs exactly $\lfloor T/2\rfloor$ steps (until $T^{(k)}\in\{0,1\}$). Each step produces one valid edge between distinct nodes, so we obtain $\lfloor T/2\rfloor$ pairs. Therefore $\widetilde{E}\ge \lfloor T/2\rfloor$, and together with upper bound \eqref{UB2} we have $\widetilde{E}=\lfloor T/2\rfloor$. Therefore \eqref{eq:e_tilda} holds.

What is left now to prove the achievability step of the proof is to reintroduce $B$, since in \eqref{eq:e_tilda} we assumed no BSM budget constraints. Let $E_0 \triangleq \widetilde{E}$ be the number of pairs constructed without BSM constraints. If $B\ge E_0$, we are done. If $B<E_0$, simply truncate the construction after $B$ pairs. Truncation preserves feasibility. Hence, we have explicitly built $x\in\mathcal{X}$ with $\sum_{(i,j)\in \mathcal{F}}x_{ij}= \min\!\left\{B,\; \widetilde{E}\right\}
= E^{\max}_{\mathrm{EGS}}.$
% \[
% \sum_{i<j}x_{ij}= \min\!\left\{B,\; \widetilde{E}\right\}
% = E^{\max}_{\mathrm{EGS}}.
% \]
Together with the upper bounds this proves $E^\star=E^{\max}_{\mathrm{EGS}}$.
\end{proof}

\subsection{Proof of Theorem~\ref{thm:egs-sum-throughput}}
\label{sec:appendix_theorem_total_throughput}
\begin{proof}
The objective $\sum_{(i,j)\in \mathcal{F}}\lambda_{ij}$ is linear over the polytope $\mathrm{co}(\mathcal{X})$, so the maximum is attained at a vertex, which is an integral allocation $x^\star\in\mathcal{X}$. Therefore,
\[ \vspace{-4pt}
R^{\mathrm{sec}}_{\mathrm{EGS}}
\;=\;
\frac{p_{\mathrm{e2e}}(\beta)}{\Delta t}\,\max_{x\,\in\,\mathcal{X}} \sum_{(i,j)\in \mathcal{F}} x_{ij}
\;=\;
\frac{p_{\mathrm{e2e}}(\beta)}{\Delta t}\,E^{\max}_{\mathrm{EGS}},
\] \vspace{-3pt}
where the last equality follows from Lemma~\ref{lem:bmatching}. 
\end{proof}

\bibliographystyle{IEEEtran}

\bibliography{my_bib.bib}

% Generated by IEEEtran.bst, version: 1.14 (2015/08/26)
\begin{thebibliography}{10}
\providecommand{\url}[1]{#1}
\csname url@samestyle\endcsname
\providecommand{\newblock}{\relax}
\providecommand{\bibinfo}[2]{#2}
\providecommand{\BIBentrySTDinterwordspacing}{\spaceskip=0pt\relax}
\providecommand{\BIBentryALTinterwordstretchfactor}{4}
\providecommand{\BIBentryALTinterwordspacing}{\spaceskip=\fontdimen2\font plus
\BIBentryALTinterwordstretchfactor\fontdimen3\font minus \fontdimen4\font\relax}
\providecommand{\BIBforeignlanguage}[2]{{%
\expandafter\ifx\csname l@#1\endcsname\relax
\typeout{** WARNING: IEEEtran.bst: No hyphenation pattern has been}%
\typeout{** loaded for the language `#1'. Using the pattern for}%
\typeout{** the default language instead.}%
\else
\language=\csname l@#1\endcsname
\fi
#2}}
\providecommand{\BIBdecl}{\relax}
\BIBdecl

\bibitem{van2014quantum}
R.~Van~Meter, \emph{Quantum networking}.\hskip 1em plus 0.5em minus 0.4em\relax John Wiley \& Sons, 2014.

\bibitem{caleffi2024distributed}
M.~Caleffi \emph{et~al.}, ``Distributed quantum computing: a survey,'' \emph{Computer Networks}, vol. 254, p. 110672, 2024.

\bibitem{briegel1998quantum}
H.-J. Briegel, W.~D{\"u}r, J.~I. Cirac, and P.~Zoller, ``Quantum repeaters: the role of imperfect local operations in quantum communication,'' \emph{Physical Review Letters}, vol.~81, no.~26, p. 5932, 1998.

\bibitem{vardoyan2021stochastic}
G.~Vardoyan \emph{et~al.}, ``On the stochastic analysis of a quantum entanglement distribution switch,'' \emph{IEEE Transactions on Quantum Engineering}, vol.~2, pp. 1--16, 2021.

\bibitem{bhambay2025optimal}
S.~Bhambay, T.~Vasantam, and N.~Walton, ``Optimal scheduling in a quantum switch,'' \emph{arXiv preprint arXiv:2501.05380}, 2025.

\bibitem{einstein1935can}
A.~Einstein, B.~Podolsky, and N.~Rosen, ``Can quantum-mechanical description of physical reality be considered complete?'' \emph{Physical review}, vol.~47, no.~10, p. 777, 1935.

\bibitem{gauthier2023architecture}
S.~Gauthier, G.~Vardoyan, and S.~Wehner, ``An architecture for control of entanglement generation switches in quantum networks,'' \emph{IEEE Transactions on Quantum Engineering}, vol.~4, pp. 1--17, 2023.

\bibitem{azuma2015all}
K.~Azuma, K.~Tamaki, and H.-K. Lo, ``All-photonic quantum repeaters,'' \emph{Nature communications}, vol.~6, no.~1, p. 6787, 2015.

\bibitem{gauthier2024demand}
S.~Gauthier, T.~Vasantam, and G.~Vardoyan, ``An on-demand resource allocation algorithm for a quantum network hub and its performance analysis,'' in \emph{2024 IEEE QCE}, vol.~1.\hskip 1em plus 0.5em minus 0.4em\relax IEEE, 2024, pp. 1748--1759.

\bibitem{yau2025service}
G.~X. Yau, T.~Vasantam, and G.~Vardoyan, ``Service-the-longest-queue among d choices policy for quantum entanglement switching,'' in \emph{2025 International Conference on Quantum Communications, Networking, and Computing (QCNC)}.\hskip 1em plus 0.5em minus 0.4em\relax IEEE, 2025, pp. 1--8.

\bibitem{vasantam2022throughput}
T.~Vasantam and D.~Towsley, ``A throughput optimal scheduling policy for a quantum switch,'' in \emph{Quantum Computing, Communication, and Simulation II}, vol. 12015.\hskip 1em plus 0.5em minus 0.4em\relax SPIE, 2022, pp. 14--23.

\bibitem{tillman2024calculating}
I.~Tillman \emph{et~al.}, ``Calculating the capacity region of a quantum switch,'' in \emph{2024 IEEE QCE}, vol.~1.\hskip 1em plus 0.5em minus 0.4em\relax IEEE, 2024, pp. 1868--1878.

\bibitem{dai2021entanglement}
W.~Dai, A.~Rinaldi, and D.~Towsley, ``Entanglement swapping in quantum switches: Protocol design and stability analysis,'' \emph{arXiv preprint arXiv:2110.04116}, 2021.

\bibitem{promponas2024maximizing}
P.~Promponas \emph{et~al.}, ``Maximizing entanglement rates via efficient memory management in flexible quantum switches,'' \emph{IEEE Journal on Selected Areas in Communications}, vol.~42, no.~7, pp. 1749--1762, 2024.

\bibitem{valls2023capacity}
V.~Valls, P.~Promponas, and L.~Tassiulas, ``On the capacity of the quantum switch with and without entanglement decoherence,'' \emph{IEEE Communications Letters}, vol.~27, no.~9, pp. 2388--2392, 2023.

\bibitem{panigrahy2023capacity}
N.~K. Panigrahy \emph{et~al.}, ``On the capacity region of a quantum switch with entanglement purification,'' in \emph{IEEE INFOCOM 2023}.\hskip 1em plus 0.5em minus 0.4em\relax IEEE, 2023, pp. 1--10.

\bibitem{vardoyan2023quantum}
G.~Vardoyan and S.~Wehner, ``Quantum network utility maximization,'' in \emph{2023 IEEE QCE}, vol.~1.\hskip 1em plus 0.5em minus 0.4em\relax IEEE, 2023, pp. 1238--1248.

\bibitem{alshowkan2021reconfigurable}
M.~Alshowkan \emph{et~al.}, ``Reconfigurable quantum local area network over deployed fiber,'' \emph{PRX Quantum}, vol.~2, no.~4, Oct. 2021.

\bibitem{barrett2005efficient}
S.~D. Barrett and P.~Kok, ``Efficient high-fidelity quantum computation using matter qubits and linear optics,'' \emph{Physical Review A}, vol.~71, no.~6.

\bibitem{chen2023zero}
K.~C. Chen \emph{et~al.}, ``Zero-added-loss entangled-photon multiplexing for ground- and space-based quantum networks,'' \emph{Physical Review Applied}, vol.~19, no.~5, May 2023.

\bibitem{kok2000postselected}
P.~Kok and S.~L. Braunstein, ``Postselected versus nonpostselected quantum teleportation using parametric down-conversion,'' \emph{Physical Review A}, vol.~61, no.~4, Mar. 2000.

\bibitem{jones2016design}
C.~Jones \emph{et~al.}, ``Design and analysis of communication protocols for quantum repeater networks,'' \emph{New Journal of Physics}, vol.~18, no.~8, p. 083015, 2016.

\bibitem{gauthier2023control}
S.~Gauthier, G.~Vardoyan, and S.~Wehner, ``A control architecture for entanglement generation switches in quantum networks,'' in \emph{Proceedings of the 1st Workshop on Quantum Networks and Distributed Quantum Computing}, 2023, pp. 38--44.

\bibitem{schrijver2003combinatorial}
A.~Schrijver \emph{et~al.}, \emph{Combinatorial optimization: polyhedra and efficiency}.\hskip 1em plus 0.5em minus 0.4em\relax Springer, vol.~24, no.~2.

\bibitem{promponas2023full}
P.~Promponas, V.~Valls, and L.~Tassiulas, ``Full exploitation of limited memory in quantum entanglement switching,'' in \emph{2023 IFIP Networking}.\hskip 1em plus 0.5em minus 0.4em\relax IEEE, 2023, pp. 1--9.

\bibitem{dhara2022heralded}
P.~Dhara \emph{et~al.}, ``Heralded multiplexed high-efficiency cascaded source of dual-rail entangled photon pairs using spontaneous parametric down-conversion,'' \emph{Physical Review Applied}, vol.~17, no.~3, p. 034071, 2022.

\end{thebibliography}

\end{document}